\documentclass[aps,prd,
eqsecnum,nofootinbib,preprintnumbers]{revtex4}

\usepackage{amsmath,amsfonts,amssymb,color,latexsym,amsthm,mathrsfs,comment,dsfont
}
\usepackage{hyperref}
\usepackage{graphicx}

\newtheorem{lemma}{Lemma}
\newcommand{\ma}[1]{\mbox{$\mathcal{#1}$}}
\newcommand{\mas}[1]{\mbox{$\mathscr{#1}$}}

\newcommand{\D}{{\rm d}}

\newcommand{\ti}{\tilde}

\begin{document}

\title{
Exact spinning wormholes in five dimensions from Toda integrability
}

\author{Masato Nozawa}
\email{
masato.nozawa@oit.ac.jp
}

\address{
Department of General Education, Faculty of Engineering, 
Osaka Institute of Technology, 5-16-1, Omiya, Asahi-ku, 
Osaka, Osaka 535-8585, Japan
}

\date{\today}

\begin{abstract} 
We construct a new exact solution to the five-dimensional Einstein-phantom-scalar system that 
describes a rotating wormhole connecting two asymptotically flat regions. 
The solution represents a rotating generalization of the Ellis-Bronnikov wormhole with equal angular momenta.
Under the cohomogeneity-one ansatz, we demonstrate that the radial equations reduce to an integrable $A_1\oplus A_2$ Toda system, from which the general local solution can be obtained in closed form. This integrable structure enables us to implement the regularity and asymptotic-flatness conditions analytically, thereby singling out a three-parameter family of solutions characterized by the mass, the equal angular momenta, and the scalar charge. The resulting spacetimes are geodesically complete, horizonless and stably causal, and contain no curvature singularities. The Toda structure further allows us to prove the uniqueness of the wormhole throat 
and to derive a sharp upper bound on the angular momentum for a fixed throat area. 
We also establish the existence and uniqueness of an ergosurface with respect to the stationary frame that is nonrotating at one asymptotic end, while the opposite end rotates relative to this frame. 
In a suitable scaling limit, the scalar field vanishes and the solution approaches the exterior region of
the extremal Myers-Perry black hole with equal angular momenta. 
\end{abstract}


\maketitle


\section{Introduction}

Exact solutions to the Einstein equations have played an indispensable role in understanding the nonlinear nature of gravity. Their construction, however, is notoriously difficult, and the adversity faced in this endeavor becomes particularly severe when rotation is present~\cite{Stephani:2003tm}. In higher-dimensional vacuum gravity, considerable progress has nevertheless been achieved by exploiting hidden integrable structures of stationary spacetimes. In five dimensions, for example, solution-generating techniques such as the inverse-scattering method have led to a rich variety of rotating black objects with no four-dimensional counterparts; see, e.g., \cite{Emparan:2008eg,Iguchi:2011qi}.

The situation becomes considerably more involved in the presence of matter fields. Although certain special theories, including supergravity models, retain sufficiently large hidden symmetries after dimensional reduction and admit powerful coset-based solution-generating techniques~\cite{Breitenlohner:1987dg}, such structures are not available for generic matter couplings. Scalar fields provide a particularly simple example for which the field equations remain highly nonlinear once rotation is introduced.

The same difficulty arises in the study of wormholes. Wormholes provide nontrivial bridges connecting otherwise distant regions of spacetime. Within general relativity, traversable wormholes typically require a violation of the null energy condition~\cite{Morris:1988cz,Morris:1988tu,Teo:1998dp}, which is often modeled by exotic matter fields. Whereas static scalar-supported wormholes, such as the Ellis-Bronnikov solution~\cite{Ellis:1973yv,Bronnikov:1973fh}, are known in closed form, obtaining genuinely rotating counterparts analytically is a substantially harder problem. Consequently, most fully nonlinear studies of rotating phantom-scalar wormholes have relied on numerical methods~\cite{Kleihaus:2014dla,Chew:2016epf,Chew:2019lsa}. For instance, the analytic solution presented in~\cite{Matos:2005uh} covers only an interior region and does not provide a global wormhole geometry connecting two asymptotically flat ends.

A natural way to render the rotating problem more tractable is to restrict attention to geometries of cohomogeneity one, for which the field equations reduce to ordinary differential equations. In four dimensions, the Newman-Unti-Tamburino (NUT) deformation provides a familiar example in which a gravitomagnetic deformation can be incorporated while retaining cohomogeneity one~\cite{Klemm:2015uba}. The resulting wormhole spacetime~\cite{Nozawa:2026jls}, however, is not asymptotically flat in the usual sense. In contrast, five dimensions offer a qualitatively different possibility. For the doubly rotating Myers-Perry geometry~\cite{Myers:1986un}, setting the two angular momenta equal enhances the rotational symmetry, allowing a genuinely rotating and asymptotically flat spacetime to maintain cohomogeneity one~\cite{Gibbons:2004uw}. The equal-angular-momenta sector therefore provides a particularly attractive setting in which to search for exact rotating wormholes.

Nevertheless, no closed-form solution has been known for rotating five-dimensional Einstein-phantom-scalar wormholes with equal angular momenta. Previous numerical studies~\cite{Dzhunushaliev:2013jja,Uemichi:2026dzb} have demonstrated the existence of regular rotating configurations and revealed several intriguing properties, including an upper bound on the angular momentum and a limiting relation to the extremal Myers-Perry geometry. An exact solution would therefore be invaluable for establishing their global structure analytically and for clarifying which of these properties follow from the underlying structure of the field equations. 
More importantly, an exact treatment may expose an integrable structure that is invisible in a purely numerical formulation.

In the present paper, we show that such an exact construction is indeed possible for the five-dimensional Einstein system coupled to a phantom scalar field.  Starting from the most general stationary cohomogeneity-one ansatz with equal angular momenta, we reduce the field equations to a one-dimensional mechanical system.  The key observation is that, after an appropriate choice of variables and gauge, the radial equations decompose into a Liouville equation and an $A_2$ Toda system.  Equivalently, the system possesses an $A_1\oplus A_2$ Toda structure, with the two sectors coupled only through the Hamiltonian constraint.  This integrability allows the local solution to be obtained in full generality.  Imposing asymptotic flatness and regularity then selects a three-parameter family of smooth rotating wormholes, which may be parametrized by the mass, the equal angular momenta, and the scalar charge.

The availability of the exact solution enables us to establish analytically a number of global properties reported in ~\cite{Dzhunushaliev:2013jja,Uemichi:2026dzb}.  We show that the spacetime is free of Killing horizons and curvature singularities, is stably causal, and is geodesically complete.  The Toda structure further provides an efficient way to prove the existence and uniqueness of the wormhole throat and to derive an upper bound on the angular momentum for a fixed throat area.  We also analyze the ergoregion, showing that the ergosurface defined with respect to a nonrotating frame at one asymptotic end exists uniquely.  The two asymptotic regions are in general rotating relative to each other, and their conserved quantities obey a simple relation following from the Komar integrals.

The exact solution also admits several limiting geometries of physical interest.  In the zero-angular-momentum limit it reduces to the five-dimensional Ellis-Bronnikov wormhole, whereas a distinct scaling limit approaches the exterior region of the extremal Myers-Perry black hole with equal angular momenta.  Another scaling limit yields a rotating generalization of the Ellis-Gibbons solution.  These limits provide useful probes of the boundary of the parameter space and, in the Toda description, correspond to different degenerations of the underlying integrable system.

Finally, we examine whether the Toda structure found here persists in higher odd dimensions.  Although the equal-angular-momenta cohomogeneity-one ansatz continues to exist, the reduced equations no longer organize themselves into a Cartan-type Toda system in the same manner.  In this sense, the $A_1\oplus A_2$ integrability uncovered in the present work is a special feature of five dimensions.  The solution therefore provides not only a rare exact example of a rotating scalar-supported wormhole, but also a setting in which the geometric properties of the spacetime can be traced directly to an underlying integrable structure.

The remainder of this paper is organized as follows. In section~\ref{sec:pre},
we fix our notation and review some basic properties of asymptotically flat
wormholes. In section~\ref{sec:construction}, we reduce the Lagrangian to
that of a one-dimensional Toda system and construct an exact solution.
The physical properties of the wormhole solution are explored in
section~\ref{sec:phys}. Concluding remarks and future directions are
presented in section~\ref{sec:conc}. The appendixes review the Toda system and derive the Myers-Perry
solution, present curvature invariants, describe a scaling limit
to the rotating Ellis-Gibbons solution, and examine extensions
of the construction to higher dimensions.


\section{Preliminaries}
\label{sec:pre}

In this paper, we consider five-dimensional Einstein gravity sourced by a massless scalar field, which is described by 
\begin{align}
\label{Lag5}
\ma L_5 =R -\epsilon (\nabla \phi)^2 \,,
\end{align}
where $\epsilon=+1$ for an ordinary scalar field, whereas $\epsilon=-1 $ for a phantom field. 
Although we are ultimately interested in the phantom case, we retain 
$\epsilon$ for the moment in order to keep track of the sign of the scalar kinetic term.
Einstein's equations and the scalar field equations are 
\begin{align}
\label{Eineq}
R_{\mu\nu}=\epsilon \nabla_\mu \phi \nabla_\nu \phi \,, \qquad \nabla^2 \phi=0 \,.   
\end{align}
 
Since we are mainly interested in asymptotically flat solutions, 
we first summarize our conventions for asymptotically flat
five-dimensional spacetimes and their conserved charges. 
From the linearized analysis, 
the metric can be expanded asymptotically $r\to \infty$ as \cite{Emparan:2008eg}
\begin{align}
\label{asymet1}
\D s^2\simeq&\,-\left(1-\frac{\mathsf m}{r^2}\right)\D t^2+\left(1+\frac{\mathsf m}{2r^2}\right)\left[
\D r^2+r^2 (\D \theta^2 +\sin^2\theta \D \phi_1^2+\cos^2\theta\D \phi_2^2)
\right]\notag \\
&
- \frac{2\mathsf m}{r^2}\D t \left(\mathsf a \sin^2\theta \D \phi_1+\mathsf b \cos^2\theta \D \phi_2\right)\,.
\end{align}
Here, $\mathsf m$, $\mathsf a$ and $\mathsf b$ are constants related to the ADM mass and angular momenta 
in each orthogonal plane as 
\begin{align}
\label{}
M=\frac{3\pi \mathsf m}{8G} \,, \qquad J_1=\frac{2}{3} M\mathsf a \,, \qquad 
J_2=\frac 23 M\mathsf b \,.   
\end{align}
Instead of the coordinates ($\theta$, $\phi_1$, $\phi_2$) parametrizing $S^3$ 
with the range $0\le \theta\le\pi/2$, $0\le \phi_{1,2}\le 2\pi$, 
it is useful to introduce Euler angles ($\vartheta, \psi, \varphi$) by 
\begin{align}
\label{}
\vartheta = 2 \theta \,, \qquad \psi=\phi_2+\phi_1 \,, \qquad 
\varphi=\phi_2-\phi_1 \,,
\end{align}
where $0\le \psi \le 4\pi$, $0\le \vartheta \le \pi$ and $0\le \varphi \le 2\pi$. 
In terms of these coordinates, the asymptotic metric can be rewritten as
\begin{align}
\label{asymet2}
\D s^2\simeq &-\left(1-\frac{\mathsf m}{r^2}\right)\D t^2+\left(1+\frac{\mathsf m}{2r^2}\right)\left[
\D r^2+\frac 14r^2 (\sigma_1^2+\sigma_2^2+\sigma_3^2)
\right]
- \frac{\mathsf m}{2r^2}\D t \left[(\mathsf a+\mathsf b)\sigma_3-(\mathsf a-\mathsf b)\sigma_3'\right]\,, 
\end{align}
where 
$\{\sigma_1, \sigma_2, \sigma_3\}$ denote the ${\rm SU}(2)$ left-invariant 1-forms 
\begin{align}
\label{}
\sigma_1+i \sigma_2 \equiv e^{i\psi}\left(i \D \vartheta +\sin\vartheta\D \varphi\right)\,, \qquad 
\sigma_3 \equiv \D \psi+\cos\vartheta \D \varphi \,, 
\end{align}
with $\sigma_3'=\D \varphi+\cos\vartheta \D \psi$. 
In the special case of equal angular momenta $\mathsf a=\mathsf b$, the cross-term involving $\sigma_3'$ vanishes, and the asymptotic metric simplifies to
\begin{align}
\label{}
\D s^2\simeq -\left(1-\frac{\mathsf m}{r^2}\right)\D t^2+\left(1+\frac{\mathsf m}{2r^2}\right)\left[
\D r^2+\frac 14r^2 (\sigma_1^2+\sigma_2^2+\sigma_3^2)
\right]
- \frac{\mathsf m\mathsf a}{r^2}\D t\sigma_3\,, 
\end{align}
with the corresponding angular momenta given by
\begin{align}
\label{}
J^{(\psi)}=\frac 12 (J_1+J_2)=\frac 23 M\mathsf a \,, \qquad J^{(\varphi)}=\frac 12 (J_2-J_1)=0 \,. 
\end{align}

\subsection{Static wormhole}
\label{sec:static}

Before going into the details of the rotating solution, it is illustrative to 
review its static counterpart.
A higher-dimensional generalization of the Ellis-Bronnikov wormhole~\cite{Ellis:1973yv,Bronnikov:1973fh}
 was constructed in \cite{Torii:2013xba} and further generalized in~\cite{Dzhunushaliev:2013jja}. Following the notation in~\cite{Martinez:2020hjm}, 
the five-dimensional solution reads 
\begin{align}
\label{EB}
\D s^2= &\,-e^{-2\beta u}\D t^2+\frac{me^{\beta u}}{\sin u}\left[\frac{\D u^2}{4\sin^2 u}+\frac 14 \left(\sigma_1^2+\sigma_2^2+\sigma_3^2\right)\right]\,, 
\\
\phi=&\, \pm \sqrt{-\frac 32 \epsilon (1+\beta^2)}\, u \,,
\label{EBphi}
\end{align}
where $\beta \in\mathbb R$ and $m(>0)$ are constants. 
For the reality of the scalar field, equation (\ref{EBphi}) requires $\epsilon=-1$, viz., 
the supporting matter field is necessarily phantom. 
Since the metric $(\sigma_1^2+\sigma_2^2+\sigma_3^2)/4$ represents 
a unit three-sphere, the solution admits a spatial ${\rm SO}(4)\simeq {\rm SU}(2)\times {\rm SU}(2)$ symmetry. 
The coordinate $u$ ranges over $0< u <\pi$, with $u=0$ and $u=\pi$
being two asymptotic ends. To make the asymptotic flatness manifest, it is useful to introduce 
a coordinate $x$ by $u=U(x)$, where 
\begin{align}
\label{U}
U(x)\equiv \frac \pi 2 -\arctan \left(\frac{x}{m}\right)\,.
\end{align}
The metric is now reduced to 
\begin{align}
\label{EBx}
\D s^2= &\,-e^{-2\beta U(x)}\D t^2+e^{\beta U(x)}\left[\frac{\D x^2}{4\sqrt{x^2+m^2}}+\frac 14 
\sqrt{x^2+m^2}(\sigma_1^2+\sigma_2^2+\sigma_3^2)\right]\,, 
\end{align}
Around one asymptotic end $r\equiv x^{1/2}\to +\infty$, 
the solution behaves as 
\begin{align}
\label{}
\D s^2 \simeq &\, -\left(1-\frac{2m \beta}{r^2}\right)\D t^2+\left(1+\frac{m \beta}{r^2}\right)
\left[\D r^2+\frac{r^2}{4}(\sigma_1^2+\sigma_2^2+\sigma_3^2)\right]\,, \\
\phi\simeq &\,\pm \sqrt{-\frac{3}{2}\epsilon (1+\beta^2)}\frac{m}{r^2 } \,.
\end{align}
Around the other asymptotic end $x\to -\infty$, the solution is expanded as 
\begin{align}
\label{}
\D s^2\simeq &\,-e^{-2\pi \beta}\left(1+\frac{2m\beta}{|x|}\right)\D t^2
+\frac{e^{\pi \beta}}{4|x|} \left(1-\frac{m\beta}{|x|}\right)\D x^2
+\frac {|x|e^{\pi \beta}}{4}\left(1-\frac{m\beta}{|x|}\right)(\sigma_1^2+\sigma_2^2+\sigma_3^2)
\notag \\
=&\, -\left(1+\frac{2m\beta e^{\pi \beta}}{\ti r^2}\right)\D \ti t\,^2
+\left(1-\frac{m\beta e^{\pi \beta}}{\ti r^2}\right)\left[\D\ti r^2+\frac {\ti r^2}{4}(\sigma_1^2+\sigma_2^2+\sigma_3^2)\right]\,,
\\
\phi\simeq &\, \pm \sqrt{-\frac 32 \epsilon (1+\beta^2)}\left(\pi-\frac{me^{\pi\beta}}{\ti r^2}\right)\,, 
\end{align}
where we have defined $\ti t=e^{-\pi \beta} t$ and $\ti r=e^{\pi \beta/2}(-x)^{1/2}$
in the second equality. 
Thus, each end is asymptotically flat in its own canonically normalized
coordinates, and the corresponding ADM masses are
\begin{align}
\label{}
M_+=\frac{3\pi}{4 G} \beta m \,, \qquad 
M_-=-\frac{3\pi}{4G} \beta  e^{\beta \pi } m\,. 
\end{align}
Note that $M_+$ and $M_-$ inevitably have opposite signs for $\beta \ne 0$. 

The wormhole throat is located at $x=m\beta $, where 
the areal radius $e^{\beta U/2}(x^2+m^2)^{1/4}$ takes a minimum. 
Since $e^{-2\beta U(x)}$ is strictly positive everywhere, the spacetime contains no Killing horizon.
Moreover, it is free of curvature singularities and is geodesically complete.
These properties constitute the basic geometric requirements for a regular traversable wormhole.

The Ellis-Bronnikov solution has been extended in several directions: dilatonic charged solutions \cite{Huang:2019arj,Nozawa:2020wet}, asymptotically anti-de Sitter solutions \cite{Huang:2020qmn,Nozawa:2020gzz,Blazquez-Salcedo:2020nsa}. In $D=4$ dimensions, 
further extensions to the C-metric \cite{Nozawa:2023aep}, the Robinson-Trautman solution
\cite{Tahamtan:2016fur,Nozawa:2023boa} and the NUT-charged solution \cite{Nozawa:2026jls} are possible.

\section{Construction of the solution}
\label{sec:construction}

\subsection{Cohomogeneity-one reduction}

We would like to  seek a rotating extension of the wormhole solution (\ref{EB}). 
To simplify the analysis, 
we adopt the following stationary cohomogeneity-one ansatz
\begin{align}
\label{metric0}
\D s^2 =- e^{2A(z)}\D t^2+N(z)^2 \D z^2+\frac 14 e^{2B(z)}(\sigma_1^2+\sigma_2^2)
+\frac 14 e^{2C(z)}\left(\sigma_3-\omega(z)\D t\right)^2 \,,\qquad 
\phi=\phi(z) \,. 
\end{align}
Accordingly, the solution depends nontrivially on a single coordinate $z$. 
Here, we leave the lapse function $N(z)$ unfixed, since we work in the 
Lagrangian framework, in which it plays the role of the Lagrange multiplier. 
Note that the above ansatz preserves the ${\rm U}(2)\simeq {\rm U}(1)\times {\rm SU}(2)$
spatial symmetry, in addition to the time translation symmetry. 
For the numerical analysis of this system, see \cite{Dzhunushaliev:2013jja,Uemichi:2026dzb}.

Inserting (\ref{metric0}) into (\ref{Lag5}) and discarding the total divergence, one obtains the effective one-dimensional Lagrangian
\begin{align}
\label{Lag1D}
L=\frac {e^{A+2B+C}}{N}\left[2 A'(2B'+C')+2B'(B'+2C')
- \epsilon \phi'{}^2 +\frac 1{8}e^{2(C-A)}\omega'{}^2 \right]
+2N  e^{A+2B+C}\left(4e^{-2B}-e^{2(C-2B)}\right)\,,
\end{align}
where a prime denotes differentiation with respect to $z$. 
Note that the last term in (\ref{Lag1D}) descends from the 
scalar curvature $R^{(3)}=2 (4e^{-2B}-e^{2(C-2B)})$  of the Berger sphere 
\begin{align}
\label{}
\D s_{\rm Berg}^2 = \frac 14 e^{2B(z)}(\sigma_1^2+\sigma_2^2)
+\frac 14 e^{2C(z)}\sigma_3^2 \,. 
\end{align} 
Variation of the Lagrangian (\ref{Lag1D}) with respect to $(A,B,C,\omega,\phi)$ gives rise to 
the corresponding Euler-Lagrange equations, while 
variation with respect to the lapse $N$ yields the Hamiltonian constraint.

Since $\omega$ and $\phi$ are cyclic variables,  their conjugate momenta are conserved
\begin{align}
\label{cyclic}
p_\omega \equiv \frac{e^{-A+2B+3C}\omega'}{4N} =j\,, \qquad 
p_\phi \equiv -\frac{2\epsilon e^{A+2B+C}\phi'}{N}=-2 Q\,,
\end{align}
where $j$ and $Q$ are constants. 
The Routh reduction $\ti L=L-(p_\omega \omega'+p_\phi \phi')$ 
gives rise to a reduced Lagrangian 
\begin{align}
\label{}
\tilde L=&\, \frac {2e^{A+2B+C}}{N}\left[ A'(2B'+C')+B'(B'+2C')\right]
+2N  e^{A+2B+C}\left(4e^{-2B}-e^{2(C-2B)}\right)
\notag \\
&+N \left(-2 e^{A-2B-3C}j^2 + \epsilon e^{-A-2B-C}Q^2 \right)\,.
\label{Lag1D2}
\end{align}
The corresponding Hamiltonian constraint  reads
\begin{align}
\label{}
\ti H =0 \,, 
\end{align}
where 
\begin{align}
\label{}
\ti H\equiv &\, \frac {2e^{A+2B+C}}{N}\left[ A'(2B'+C')+B'(B'+2C')\right]
-2N  e^{A+2B+C}\left(4e^{-2B}-e^{2(C-2B)}\right)
\notag \\
&+N \left(2 e^{A-2B-3C}j^2 -\epsilon e^{-A-2B-C}Q^2 \right)\,.
\end{align}
Henceforth we consider the case where $j$ is nonvanishing. 
By flipping the direction of rotation if necessary, one can assume $j>0$
without losing any generality.

To make the analysis more manageable, define new variables $(q_0(z), q_1(z), q_2(z))$ by
\begin{align}
\label{}
q_0 \equiv  A+B+C+\ln 2 \,, \qquad 
q_1 \equiv 2(A+C)+\frac 13 \ln (64j^2) \,, \qquad 
q_2 \equiv 2 A+\frac 23 \ln (8j^2) \,,
\end{align}
and work in a gauge 
\begin{align}
\label{gauge}
N=e^{A+2B+C} =\frac 12 e^{2q_0-\frac 12 q_1}j^{1/3}  \,. 
\end{align}
Then, the relation (\ref{cyclic}) gives 
\begin{align}
\label{omegaphip}
\omega '=\frac{e^{2q_2-q_1}}{j}\,, \qquad \phi'= \epsilon  Q \,, 
\end{align}
and the reduced Lagrangian (\ref{Lag1D2}) simplifies considerably to 
\begin{align}
\label{Lq}
\tilde L=2 q_0'{}^2+2 e^{2 q_0 } 
-\frac 12 \left(q_1'{}^2-q_1'q_2'+q_2'{}^2 +e^{2 q_1-q_2}+e^{2q_2-q_1} \right)+ \epsilon Q^2\,, 
\end{align}
with the constraint
\begin{align}
\label{Hq}
\ti H= 2 q_0'{}^2-2 e^{2 q_0 } 
-\frac 12 \left(q_1'{}^2-q_1'q_2'+q_2'{}^2 -e^{2 q_1-q_2}-e^{2q_2-q_1} \right)-\epsilon Q^2=0\,.
\end{align}

With these variables, the system decomposes into two sectors. 
$q_0$ decouples from other variables and obeys the Liouville equation
\begin{align}
\label{A1Toda}
q_0''=e^{2q_0} \,,
\end{align}
with the conserved energy 
\begin{align}
\label{EL}
E_{\rm L} =q_0'{}^2 - e^{2 q_0} \,. 
\end{align}
The other sector involves $q_1$ and $q_2$, which satisfy the coupled equations
\begin{align}
\label{}
q_1''=e^{2 q_1-q_2 }\,, \qquad q_2 ''= e^{2q_2 -q_1 } \,,  
\end{align}
with the quadratic and cubic integrals 
\begin{align}
\label{ET}
E_{\rm T}=&\, \frac 12 \left(q_1'{}^2-q_1'q_2'+q_2'{}^2 -e^{2 q_1-q_2}-e^{2q_2-q_1}\right)\,, \\
\label{C3}
C_3 =&\, q_1'q_2'(q_1'-q_2')-q_2' e^{2q_1-q_2}+q_1' e^{2q_2-q_1} \,. 
\end{align}
It follows that the Hamiltonian constraint (\ref{Hq}) reduces to 
\begin{align}
\label{constraint}
2E_{\rm L}-E_{\rm T}-\epsilon Q^2=0 \,. 
\end{align}

Introducing 
\begin{align}
\ma C_{ij} \equiv 
\left(
\begin{array}{cc}
2 & -1 \\
-1 & 2  
\end{array}
\right)\,,
\label{CartanA}
\end{align}
equations for $q_i$ ($i=1,2$) can be rewritten as
\begin{align}
\label{A2Toda}
q_i''= \exp \left(\sum_{j=1}^2 \ma C_{ij}q_j \right) \,. 
\end{align}
The matrix $\ma C$ is precisely the Cartan matrix of the semisimple Lie algebra of type $A_2$.
Consequently, the coupled equations for $q_1$ and $q_2$ constitute the one-dimensional $A_2$-Toda system~\cite{Bogoyavlensky}, whose integrability allows the general local solution to be obtained explicitly~\cite{Kostant:1979qu,Olshanetsky,Anderson:1995sz}. 
The two exponential interactions have distinct geometric origins: the term $e^{2q_1-q_2}$ originates from the curvature potential of the Berger sphere, whereas $e^{2q_2-q_1}$ arises from the Routh reduction of the rotational degree of freedom at fixed $j$. The constant shifts in the definitions of $q_1$ and $q_2$ bring these interactions to their canonical normalization.

Since the decoupled Liouville equation for $q_0$ is itself regarded as the rank-one $A_1$-Toda equation, 
the radial equations for $(q_0,q_1,q_2)$ are governed by an $A_1\oplus A_2$ Toda system. 
This direct-sum structure refers to the decoupling of the second-order equations.
Nevertheless, the two sectors remain linked by the Hamiltonian constraint, which relates their conserved energies to the scalar charge through (\ref{constraint}). The distinction between ordinary and phantom scalars therefore enters through this constraint, rather than through a modification of the Toda equations. 
In appendix~\ref{app:Lax}, we give a basic review of the Lax pair representation of the $A_1\oplus A_2$ Toda system.

To summarize this section, 
once $q_0$ and $q_i$ are determined from the Liouville (\ref{A1Toda}) and Toda equations (\ref{A2Toda}) 
respectively, the spacetime fields are reconstructed as
\begin{align}
\label{metrics}
\D s^2=&\, -\frac{e^{q_2}}{4j^{4/3}}\D t^2+\frac 14 j^{2/3} e^{4q_0-q_1}\D z^2
+\frac 14 j^{2/3} e^{2q_0-q_1} (\sigma_1^2+\sigma_2^2)+\frac 14j^{2/3} e^{q_1-q_2} 
\left(\sigma_3-\omega(z) \D t\right)^2 \,,\\
\label{phis}
\phi=&\, \phi_0 +\epsilon Q z \,, 
\end{align}
where $\phi_0$ is an integration constant. The scalar charge $Q$ is subject to the Hamiltonian constraint (\ref{constraint}). 
Using the Toda equation for $q_2$ (\ref{A2Toda}) in the first equation of (\ref{omegaphip}),  $\omega $ is integrated to give 
\begin{align}
\label{}
 \omega(z)=\omega_0+\frac 1j  q_2'(z)  \,. 
\end{align}
Here,  $\omega_0$ is an integration constant corresponding to
the choice of rotating frame. 
The construction of the spacetime can accordingly be organized into two steps: integrating the $A_1\oplus A_2$ Toda equations locally, and then selecting integration constants that satisfy the Hamiltonian constraint and the desired reality, regularity, and asymptotic conditions.


\subsection{Solving the Toda system for wormholes}
\label{sec:solvingToda}

In this section, we provide a detailed derivation of the solution based on regularity and asymptotic conditions. 
Readers wishing to skip the technical details may proceed directly to section~\ref{sec:wormhole}.

Since the Toda equations are integrable, various methods for solving them have been extensively studied and applied to
black-hole solutions, see e.g., \cite{Gibbons:1987ps,Monni:1995vu,Lu:1996hh,Chemissany:2010ay,Lu:2013uia,Lu:2026lkv,Yun:2026ure}. 
For the Liouville sector, it is convenient to define 
\begin{align}
\label{tau0}
\tau_0(z) \equiv e^{-q_0(z)} \,,  
\end{align}
obeying 
\begin{align}
\label{}
\tau_0'{}^2- \tau_0 \tau_0''=1 \,. 
\end{align}
On account of $(\tau_0''/\tau_0)'=0$, 
the general solution of the Liouville equation (\ref{A1Toda}) is
\begin{align}
\label{q0sol}
\tau_0(z)=\frac{\sinh (\mu_0z+c_0)}{\mu_0 }\,,
\end{align}
where  $\mu_0 $ and $c_0$ are complex constants. 
The Liouville energy (\ref{EL}) is now reduced to $E_{\rm L}=\mu_0^2$, 
and hence $\mu_0$ must be real or purely imaginary. 

Similarly,  the general local solution of the
$A_2$ Toda equations (\ref{A2Toda}) can be expressed as finite sums of exponentials for distinct spectral parameters.
To see this, it is convenient to introduce
$\tau$-functions \cite{Anderson:1995sz}
\begin{align}
\label{tau}
\tau_i(z) \equiv e^{-q_i(z)}\,, 
\end{align}
in terms of which the Toda equations (\ref{A2Toda}) are reduced to bilinear relations 
\begin{align}
\label{taueq}
\tau_1'{}^2-\tau_1\tau_1'' =\tau_2\,, \qquad 
\tau_2'{}^2-\tau_2\tau_2'' =\tau_1\,. 
\end{align}
The solution for the $\tau$-functions can be written as 
\begin{align}
\label{tau1ansatz}
\tau_1 =\sum_{i=1}^3 a_i e^{\lambda_i z}\,,  
\end{align}
where $a_i$ and $\lambda_i$ are constants satisfying 
\begin{align}
\label{}
 \sum_{i=1}^3 \lambda_i =0 \,, \qquad a_1 a_2 a_3 \prod_{i<j} (\lambda_i-\lambda_j)^2 =-1 \,. 
\end{align}
When the spectral parameters $\lambda_i$ are not degenerate, 
the general solutions can be parameterized as \cite{Anderson:1995sz,Lu:2026lkv}
\begin{subequations}
\label{taui}
\begin{align}
\label{q1sol}
\tau_1(z)=&\, \frac{e^{\mu_1 z+c_1}}{(\mu_1-\mu_2)(2\mu_1+\mu_2)}
+\frac{e^{\mu_2 z+c_2}}{(\mu_1-\mu_2)(\mu_1+2\mu_2)}-\frac{e^{-[(\mu_1+\mu_2)z+c_1+c_2]}}{(2\mu_1+\mu_2)(\mu_1+2\mu_2)}
\,, \\
\label{q2sol}
\tau_2(z)=&\, \frac{e^{-(\mu_1 z+c_1)}}{(\mu_1-\mu_2)(2\mu_1+\mu_2)}
+\frac{e^{-(\mu_2 z+c_2)}}{(\mu_1-\mu_2)(\mu_1+2\mu_2)}-\frac{e^{(\mu_1+\mu_2)z+c_1+c_2}}{(2\mu_1+\mu_2)(\mu_1+2\mu_2)}
\,.
\end{align}
\end{subequations}
Here, $\mu_i $ and $c_i$ are complex constants and we have set 
$\lambda_1=\mu_1$, $\lambda_2=\mu_2$ and $\lambda_3=-(\mu_1+\mu_2)$. 
The Hamiltonian constraint (\ref{constraint}) is now reduced to 
\begin{align}
\label{}
4\mu_0^2 -(\mu_1^2+\mu_1\mu_2+\mu_2^2)-2\epsilon Q^2 =0\,. 
\end{align}
The integration constants must be chosen so that the geometry describes
a regular wormhole connecting two asymptotically flat regions. 
In appendix~\ref{app:MP}, we demonstrate the derivation of the Myers-Perry black hole
with equal angular momenta from the Toda system. 

We first focus on the behavior of $q_0$. At both asymptotic ends, 
the metric components (\ref{metric0}) behave as 
\begin{align}
\label{asybehavior}
e^{2A} \to {\rm const.}\,, \qquad e^{2B}\simeq  e^{2C}\to + \infty\,, 
\end{align}
giving $q_0 = A+B+C+\ln 2 \to +\infty$. Since $q_0\to +\infty$  at both asymptotic ends,
it must admit a minimum $q_0^{(\rm min)}$ in the interior, where 
$E_{\rm L}=\mu_0^2= q_0'{}^2 - e^{2 q_0}=- e^{2q_0^{\rm (min)}}<0$. 
Thus, the integration constant $\mu_0$ must be pure imaginary. 
The reality of $q_0$ requires that $c_0$ is also pure imaginary.
Using the shift degree of freedom $z\to z+{\rm const}$, 
one can set $c_0=\pi i/2 $. Writing 
$\mu_0 =2m i$ with $m>0$, 
we obtain 
\begin{align}
\label{}
\tau_0(z)=\frac{\cos(2m z)}{2m} \,, \qquad 
-z_0<z<z_0 \,, \qquad z_0 \equiv \frac{\pi}{4m} \,. 
\end{align}
Here, $z=\pm z_0$ correspond to the two asymptotic ends.

We next translate the asymptotic and regularity requirements into
conditions on the Toda $\tau$-functions (\ref{tau}), in terms of which the metric is rephrased as 
\begin{align}
\label{metrictau}
\D s^2=&\, -\frac{1}{4j^{4/3}\tau_2(z)}\D t^2+\frac{j^{2/3} \tau_1(z)}{4\tau_0(z)^4}\D z^2
+\frac{j^{2/3}\tau_1(z)}{4\tau_0(z)^2} (\sigma_1^2+\sigma_2^2)+\frac { j^{2/3} \tau_2(z)}{4\tau_1(z)} 
\left(\sigma_3-\omega(z) \D t\right)^2 \,,
\end{align}
where $\omega(z)=\omega_0-\tau_2'(z)/[j\tau_2(z)]$. 
Since $\tau_0(z)$ is finite and positive for $-z_0<z<z_0$, the
functions $\tau_i$ must remain positive throughout this interval to 
retain a regular interior with nondegenerate spacelike three-sphere orbits.
At each asymptotically flat end, 
it follows from (\ref{asybehavior}) and (\ref{metrictau}) that
\begin{align}
\label{eq:TauEndpointLimits}
\tau_1(\pm z_0)=0\,,
\qquad
0<\tau_2(\pm z_0)=\tau_1'(\pm z_0)^2<\infty\,,
\end{align}
where the second equation follows from the bilinear identities (\ref{taueq}). 
Thus, the zeros of $\tau_1$ at the two endpoints must be simple.
Positivity in the interior further demands 
\begin{align}
\label{tau1pend}
\tau_1'(-z_0)>0 \,, \qquad \tau_1'(z_0)<0 \,. 
\end{align}

These conditions cannot be satisfied when $\mu_1$ and $\mu_2$ are both real.
To see this, order the spectral parameters in (\ref{tau1ansatz}) as
$\lambda_1<\lambda_2<\lambda_3$ and consider
\begin{align}
e^{-\lambda_2 z}\tau_1(z)
= a_1 e^{(\lambda_1-\lambda_2)z}+a_2+a_3 e^{(\lambda_3-\lambda_2)z}\,.
\end{align}
Its derivative is a sum of two exponentials whose exponents have
opposite signs, and therefore has at most one zero.
Since $e^{-\lambda_2 z}\tau_1$ vanishes at both endpoints and is
positive in the interior, its unique stationary point must be a
maximum, which requires $a_1<0$ and $a_3<0$. 
However, positivity in the interior then implies $a_2>0$, contradicting the condition $a_1a_2a_3<0$.
Thus, the spectral parameters cannot be three distinct real numbers.\footnote{One can similarly verify that the cases with degenerate spectral parameters are incompatible with the boundary conditions.}
Since the quadratic (\ref{ET}) and cubic (\ref{C3}) conserved quantities are real, the three spectral values are roots of a cubic polynomial with real coefficients. See the Lax representation discussed in appendix~\ref{app:Lax}.
Having excluded a real spectrum, we conclude that it must consist of one real root and a complex-conjugate pair.
We therefore write
\begin{align}
\mu_1=a+ib\,,
\qquad
\mu_2=a-ib\,,
\qquad
a\in\mathbb R\,,
\quad b>0\,.
\end{align}
In the parametrization above, reality of the $\tau$-functions is
ensured by
\begin{align}
c_2=\overline{c_1}+i\pi \pmod{2\pi i}\,.
\end{align}
Parametrizing $c_1=d_1+i(d_2+\pi)$ and $c_2=d_1-i d_2$ with $d_1, d_2 \in \mathbb R$, 
the $\tau$-functions then take the manifestly real form
\begin{subequations}
\label{tausol}
\begin{align}
\tau_1(z)
&=\frac{e^{d_1}}{9a^2+b^2}
\left[e^{-2az-3d_1}+e^{az}\left\{\cos(bz+d_2)-\frac{3a}{b}\sin(bz+d_2)\right\}\right]\,,
\\
\tau_2(z)
&=
\frac{e^{-d_1}}{9a^2+b^2}\left[e^{2az+3d_1}+e^{-az}\left\{\cos(bz+d_2)+\frac{3a}{b}\sin(bz+d_2)\right\}\right]\,. 
\end{align}
\end{subequations}

We first derive necessary conditions on the parameters $a,b, d_1, d_2$ for a regular solution.
The explicit expressions for the $\tau$-functions above give
\begin{align}
\label{taupeq}
\left(e^{2az}\tau_1\right)'
&=-\frac{e^{3az+d_1}}{b}\sin(bz+d_2)\,,
\qquad 
\left(e^{-2az}\tau_2\right)'
=-\frac{e^{-3az-d_1}}{b}\sin(bz+d_2)\,.
\end{align}
For a regular solution, both $\tau_1$ and $\tau_2$ must be positive in the interior, 
so that the first bilinear identity (\ref{taueq}) implies
\begin{align}
\label{taulogconcavity}
\left[\ln\!\left(e^{2az}\tau_1\right)\right]''
=(\ln\tau_1)''
=-\frac{\tau_2}{\tau_1^2}<0\,.
\end{align}
Hence $\ln(e^{2az}\tau_1)$ is strictly concave. Since $\tau_1$ has simple zeros at both endpoints,
$\left[\ln\!\left(e^{2az}\tau_1\right)\right]'$ is strictly decreasing, 
and therefore vanishes exactly once.
From (\ref{taupeq}), one finds that successive stationary points of $e^{2az}\tau_1$ are separated by $\pi/b$.
Since the endpoint zeros of $\tau_1$ are simple, the interval must have length strictly less than $2\pi/b$, thus $0<bz_0<\pi$. 

The endpoint conditions $\tau_1(\pm z_0)=0$ can be written as
\begin{align}
\label{tauendpointrelations}
e^{3d_1}\cos\ti b
\left(\cos d_2-\frac{\ti a}{\ti b}\sin d_2\right)
&=-\cosh\ti a\,,
\qquad 
e^{3d_1}\sin\ti b
\left(\sin d_2+\frac{\ti a}{\ti b}\cos d_2\right)
=-\sinh\ti a\,,
\end{align}
where we have denoted $\ti a =3 z_0 a$ and $\ti b=b z_0$. 
Since the first relation implies $\cos\ti b\ne 0$, we have 
\begin{align}
\label{tau1atcenter}
\tau_1(0)
=\frac{z_0^2 e^{-2d_1}}{\ti a^2+\ti b^2}
\left(1-\frac{\cosh\ti a}{\cos\ti b}\right)\,. 
\end{align}
The positivity of $\tau_1$ requires $\cos\ti b<0$.
Combining this with $0<\ti b<\pi$, we obtain
\begin{align}
\label{brange}
\frac{\pi}{2}<\ti b<\pi\,.
\end{align}
Within this range, the endpoint conditions uniquely determine $d_1$ and
determine $d_2$ modulo $2\pi$. 
Taking into account the condition $\sin(2\ti b)<0$, equation (\ref{tauendpointrelations}) gives 
\begin{align}
\label{d1value}
e^{3d_1}
=-\frac{\ti b}{\sin(2\ti b)}
\sqrt{
\frac{2[\cosh(2\ti a)-\cos(2\ti b)]}
{\ti a^2+\ti b^2}
}\,, 
\end{align}
and 
\begin{align}
\label{d2value}
\cos d_2 =
\frac{\sqrt{2}(\tilde b \sin\ti b\cosh\ti a+\tilde a\cos \tilde b\sinh \tilde a)}
{\sqrt{(\tilde a^2+\tilde b^2)
[\cosh (2\ti a)-\cos(2\ti b)]}}\,,
\qquad 
\sin d_2 =
\frac{\sqrt2 (-\tilde a\sin \tilde b \cosh \ti a+\tilde b\cos \tilde b\,\sinh\tilde a)}
{\sqrt{(\tilde a^2+\tilde b^2)
[\cosh (2\ti a)-\cos(2\ti b)]}}\,. 
\end{align}
We choose the representative $-\pi<d_2\le\pi$.

It remains to impose the endpoint slope conditions (\ref{tau1pend}). 
Evaluating (\ref{taupeq}) at the endpoints gives
\begin{align}
\label{tau1p}
\tau_1'(\pm z_0)
=-\frac{e^{d_1\pm az_0}}{b} \sin(d_2\pm\ti b)\,.
\end{align}
For the range (\ref{brange}) and the phase representative (\ref{d2value}), the two slope conditions are equivalent to
$|d_2|<\pi-\ti b$. 
Eliminating $d_2$ by means of (\ref{d2value}), these conditions become
\begin{align}
\label{atbtcond}
\ti b e^{2\ti a}
&>\ti a\sin(2\ti b)+\ti b\cos(2\ti b)\,,
\qquad 
\ti b e^{-2\ti a} >\ti b\cos(2\ti b)-\ti a\sin(2\ti b)\,.
\end{align}
Since $\sin(2\ti b)<0$, the first inequality is automatic for $\ti a\geq 0$, whereas the second is automatic for $\ti a\leq 0$.

Thus far, we have determined the necessary conditions for the regularity at the boundaries. 
These conditions also suffice to ensure positivity throughout the interior.
Indeed, equations (\ref{brange}) and  $|d_2|<\pi-\ti b$ imply
\begin{align}
\label{phaseinterval}
-\pi<d_2-\ti b<0<d_2+\ti b<\pi\,.
\end{align}
It follows from (\ref{taupeq}) that both
$e^{2az}\tau_1$ and $e^{-2az}\tau_2$ increase up to
$z=-d_2/b$ and decrease thereafter.
The former vanishes at both endpoints and is therefore strictly positive in the interior.
The latter is positive at both endpoints, since the bilinear identity gives
$\tau_2(\pm z_0)=[\tau_1'(\pm z_0)]^2>0$, and hence remains strictly positive throughout the interval.
Thus, we establish that conditions (\ref{brange}) and (\ref{atbtcond}) are sufficient to guarantee
\begin{align}
\label{tau12pos}
\tau_1(z)>0 \,, \qquad \tau_2(z)>0 \qquad (-z_0<z<z_0)\,.
\end{align}

Having exhausted the regularity conditions for the Liouville and Toda variables, 
we now determine the remaining constants. 
The Hamiltonian constraint (\ref{constraint}) now gives the scalar charge
\begin{align}
\label{}
\epsilon Q^2=-\frac{8m^2}{3\pi^2}(\ti a^2-3\ti b^2+3\pi^2)\,.
\end{align}
Due to $\ti b<\pi$, this condition demands the phantom field $\epsilon=-1$. 
Requiring the metric to be asymptotically nonrotating at one of the two asymptotic ends $z=z_0$, 
we have $\omega (z_0)=0$, leading to
\begin{align}
\label{}
\omega_0 =\frac{\tau_2'(z_0)}{j \tau_2(z_0)} \,. 
\end{align}
Finally, normalizing the time coordinate so that $e^{2A}\to 1$ at one asymptotic end $z=z_0$, 
we obtain  
\begin{align}
\label{jnorm}
j =\frac{1}{[4\tau_2(z_0)]^{3/4}} \,.  
\end{align}

It follows that the spacetime is characterized by three parameters 
$m$, $\ti a$ and $\ti b$, which collectively determine the ADM mass, equal angular momentum and the 
scalar charge.

\subsection{Wormhole solution}
\label{sec:wormhole}

To make contact with the static solution presented in section \ref{sec:static}, 
we introduce a  new coordinate $u$ by  
\begin{align}
\label{}
u=\frac \pi 2 -2 m z \,, \qquad 0 <u < \pi \,, 
\end{align}
where $u=0$ and $u=\pi $ correspond to the two asymptotic ends. 
We relabel the parameters as 
\begin{align}
\label{}
\beta \equiv -\frac{2\ti a}{3\pi}\,, \qquad 
\nu \equiv \frac{2\ti b}{\pi}\,,
\end{align}
in terms of which the regularity conditions (\ref{brange}) and (\ref{atbtcond}) are equivalently expressed as
\begin{align}
\label{nubetacond}
1<\nu <2 \,, \qquad 
D_\pm \equiv e^{\pm 3\pi \beta}-\cos (\pi \nu)\mp \frac{3\beta}{\nu}\sin(\pi \nu)>0 \,.
\end{align}
The corresponding parameter domain is depicted in figure~\ref{fig:parameter}. 
  
\begin{figure}
\begin{center}
\includegraphics[width=7.5cm]{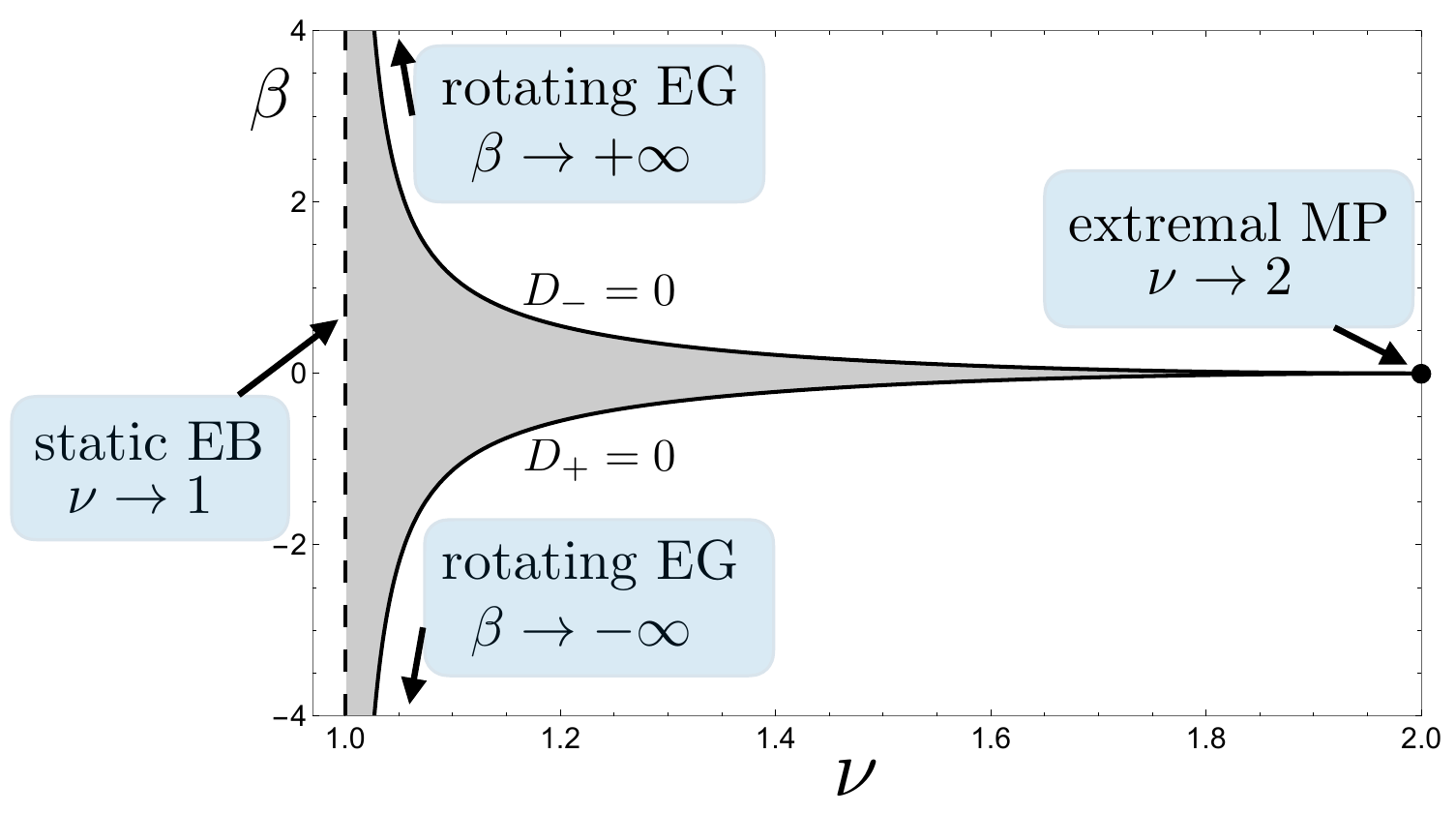}
\caption{Allowed parameter region for the regular rotating wormhole solutions.
The shaded region satisfies $1<\nu<2$ and $D_\pm>0$.
The limits $\nu\to 1$, $\beta\to \pm \infty$ with $\nu\to 1$, and $\nu \to 2$ correspond to 
the static Ellis-Bronnikov solution, the rotating Ellis-Gibbons solution (appendix~\ref{app:EllisGibbons}) and the 
extremal Myers-Perry solution (section~\ref{sec:MP}), respectively.   
}
\label{fig:parameter}
\end{center}
\end{figure}

After reversing the orientation of the rotational Killing direction $\sigma_3\to-\sigma_3$,
the solution is then simplified to 
\begin{align}
\label{metric}
\D s^2=&\, -\frac{e^{-2\beta u}}{H_2(u)}\D t^2+\frac{me^{\beta u}}{4 \sin u}\Bigg\{
H_1(u)\left(\frac{ \D u^2}{\sin^2 u }+\sigma_1^2+\sigma_2^2\right)
+\frac{H_2(u)}{H_1(u)}\left(\sigma_3-\omega(u) \D t \right)^2
\Bigg\}\,, \\
\phi=& \, \pm \sqrt{-\epsilon \frac{4+3\beta^2-\nu^2}{2}}u \,,\label{scalar}
\end{align}
where 
\begin{align}
\label{}
H_1(u)\equiv &\, \frac 1{\sin u}\left\{\frac{1+3 \beta k}{\nu}\sin (\nu u)+k 
\left[e^{-3\beta u}-\cos (\nu u)\right]\right\}\,, \\
H_2(u) \equiv &\, 1+k \left\{[6\beta +(\nu^2+9\beta^2)k]W(u)+\frac{\nu^2+9\beta^2}{\nu}e^{-3\beta u}\sin (\nu u)\right\}\,, 
\\
\omega (u) \equiv &\, \frac{2j}{m^2(\nu^2+9\beta^2)}  \frac{W(u)}{H_2(u)}\,,\\
W(u)\equiv &\, 1- e^{-3\beta u} \left(\cos (\nu u)+\frac{3\beta }{\nu}\sin (\nu u)\right)\,.
\end{align}
Here we have introduced
\begin{align}
\label{}
k \equiv -\frac{\sin (\pi \nu)}{\nu D_- }>0 \,.
\end{align}
The normalization (\ref{jnorm}) now reads
\begin{align}
\label{}
j = m (\nu^2+9\beta^2)\sqrt{km[(1+3\beta k)^2+\nu^2 k^2]}\,. 
\end{align}
With these ingredients at hand, one can verify that the 
metric (\ref{metric}) and the scalar field (\ref{scalar})
solve Einstein's equations and the scalar field equation (\ref{Eineq}).

Since the Liouville and Toda variables are reduced to 
\begin{align}
\label{}
\tau_0(z)=\frac{\sin u}{2m}\,, \qquad 
\frac{\tau_1(z)}{-\tau_1'(z_0)}= \frac 1{2m} e^{\beta u} \sin u  H_1(u) \,, \qquad 
\frac{\tau_2(z)}{\tau_2(z_0)}=e^{2\beta u} H_2 (u) \,, 
\end{align}
the condition (\ref{tau12pos}) amounts to the positivity  of metric functions 
\begin{align}
\label{H12pos}
H_1(u) >0 \,, \qquad H_2(u) >0 \qquad (0<u< \pi) \,,
\end{align}
which is assured by (\ref{nubetacond}). 
Specifically, the boundary values of metric functions are all nonnegative
\begin{align}
\label{metricfunbound}
H_1(0)=H_2(0)=1 \,, \qquad 
H_1(\pi)=\sqrt{H_2(\pi)}=\frac{e^{-3\pi \beta }D_+}{D_-} \,, \qquad W(0)=0 \,, \qquad 
W(\pi)=e^{-3\pi \beta}D_+  \,.  
\end{align}


In the  $\nu\to 1$ limit, we have 
\begin{align}
\label{}
k\to 0 \,, \qquad H_1\to 1 \,,  \qquad H_2\to 1 \,,\qquad \omega \to 0 \,.
\end{align}
Therefore, the solution reduces to the Ellis-Bronnikov wormhole (\ref{EB}) and the 
present solution is regarded as its rotating generalization.


\section{Physical properties}
\label{sec:phys}

In this section, we explore various physical properties of the rotating spacetime (\ref{metric}). 
We demonstrate that the present metric (\ref{metric}) indeed describes a regular rotating wormhole, 
which is asymptotically flat at both ends. The spacetime is horizonless, geodesically complete
and free of curvature singularities. A key ingredient in establishing these properties is the strict positivity (\ref{H12pos}) of $H_1$ and $H_2$ throughout $0<u<\pi$.

\subsection{Absence of Killing horizons and closed causal curves}

A first noteworthy property of the spacetime (\ref{metric}) is the absence of Killing horizons. 
Suppose, to the contrary, that a Killing horizon exists at $u=u_{\rm h}$ and is generated by
$\chi=\partial/\partial t+\Omega_{\rm H}\partial/\partial \psi$, where
$\Omega_{\rm H}$ is a constant.
Since the axial Killing field $\partial/\partial \psi$ is tangent to such a
horizon, normality of $\chi$ would require
\begin{align}
\label{horizonnormality}
0=g(\chi,\partial/\partial \psi)
=g_{\psi\psi}\bigl[\Omega_{\rm H}-\omega(u_{\rm h})\bigr]\,.
\end{align}
The strict positivity of $g_{\psi\psi}$ therefore requires
$\Omega_{\rm H}=\omega(u_{\rm h})$ on the putative horizon.
However, this gives
\begin{align}
\label{horizonnorm}
g(\chi,\chi)
=-e^{2A(u_{\rm h})}
+g_{\psi\psi}\bigl[\Omega_{\rm H}-\omega(u_{\rm h})\bigr]^2
=-\frac{e^{-2\beta u_{\rm h}}}{H_2(u_{\rm h})}<0\,,
\end{align}
contradicting the requirement that the horizon generator be null.
Hence, no such Killing horizon exists in the regular interior.

The fact that the squared lapse associated with the constant-$t$ foliation remains
strictly positive can be rephrased as 
\begin{align}
\label{}
g^{\mu\nu}\nabla_\mu t\nabla_\nu t
=g^{tt}=-e^{2\beta u}H_2(u)<0\,. 
\end{align}
Thus, the gradient of $t$ is everywhere timelike.
Since $t\in\mathbb R$ is globally defined, 
it provides a global time function with timelike gradient, establishing that the present spacetime is 
stably causal (see Theorem 3.56 of \cite{Minguzzi:2006sa}).
In particular, $t$ varies strictly monotonically along every future-directed causal curve, which excludes closed causal curves. Consistently, the metric induced on each constant-$t$ hypersurface
is positive definite.

As shown in section~\ref{sec:geodesics}, the present spacetime is timelike
geodesically complete and hence admits no proper extension as a smooth
Lorentzian manifold (see Theorem~1.2 of \cite{Graf:2017xym}).
Thus, the above causality statement applies to the inextendible
spacetime, rather than merely to a proper coordinate patch of a larger
smooth extension.

\subsection{Curvature singularities}

Scalar polynomial curvature singularities can first be diagnosed by  the blow up of curvature invariants.
For instance,  the scalar curvature reads
\begin{align}
\label{}
R=\epsilon (\nabla \phi)^2 = -\frac{2(4+3\beta^2-\nu^2)e^{-\beta u}\sin^3 u}{m H_1(u) } \,. 
\end{align}
Since $H_1$ remains strictly positive and finite, $R$ is regular
throughout the interior and vanishes at both asymptotic ends. 
Although the Kretschmann scalar in the coordinates (\ref{metric})
is too lengthy to be illuminating, it takes a simpler form in terms of the $\tau$-functions. 
As demonstrated in appendix~\ref{sec:curvature}, the Kretschmann scalar
is also finite. 
Since other curvature invariants also do not display any divergent behaviors, 
we conclude that the spacetime is free of curvature polynomial singularities.

\subsection{Asymptotic structure}

To see the asymptotic structure of the spacetime, 
it is convenient to change the coordinate $u$ to $x=m \tan[(\pi-2u)/2]$ by (\ref{U}).  
For $x\to \infty$, one obtains 
\begin{align}
\label{}
-g_{tt}\simeq 1-\frac{m[2\beta+k(9\beta^2+\nu^2)]}{x} \,, \qquad 
g_{t\psi} \simeq -\frac{j}{4x} \,, 
\end{align}
with $g_{xx}\simeq 1/(4x)$, $g_{\vartheta\vartheta}\simeq g_{\psi\psi}\simeq x/4$. 
Thus, the spacetime is asymptotically flat around $x\to +\infty$ with 
the ADM mass and the angular momentum 
\begin{align}
\label{}
M_+=\frac{3\pi}{8G}m[2\beta+k(9\beta^2+\nu^2)]\,, \qquad 
J_{+}^{(\psi)}=\frac{\pi j}{8G}\,.
\end{align}

We next investigate the other asymptotic end $x\to -\infty$. 
For $x\to -\infty$, it is crucial to notice that the angular velocity of the spacetime 
approaches  a nonvanishing constant as~\cite{Dzhunushaliev:2013jja}
\begin{align}
\label{}
\omega =\omega_{-\infty}+O(1/|x|^2)\,, \qquad 
\omega_{-\infty}\equiv \frac{2j  e^{3\pi \beta}D_-^2}{m^2(9\beta^2+\nu^2)D_+}\,. 
\end{align}
Thus, we need to work in a non-rotating frame $\psi=\ti \psi+\omega_{-\infty} t $
to properly assess the conserved quantities.  
Further rescaling the coordinates 
\begin{align}
\label{}
\ti x= -\frac{e^{-2\pi \beta}D_+}{D_-} x \,, \qquad \ti t = \frac{e^{2\pi \beta}D_-}{D_+}t \,, 
\end{align}
the asymptotic expansion $\ti x\to +\infty $ gives 
\begin{align}
\label{}
-g_{\ti t \ti t} \simeq 1+\frac{m e^{-2\pi \beta}}{D_- \ti x}\left[2\beta D_+-kD_- (9\beta^2+\nu^2)\right]\,, \qquad  g_{\ti t \ti \psi} \simeq 
\frac{j}{4\ti x}\,, 
\end{align}
with $g_{\ti x\ti x}\simeq 1/(4\ti x)$, $g_{\vartheta\vartheta}\simeq g_{\ti \psi\ti \psi}\simeq \ti x/4$. 
Thus, 
\begin{align}
\label{}
M_-=-\frac{3\pi m e^{-2\pi \beta}}{8D_- G}\left[2\beta D_+-kD_- (9\beta^2+\nu^2)\right] \,, \qquad 
J_-^{(\ti \psi)}=-\frac{\pi j}{8G}= 
- J_+ ^{(\psi)}\,. 
\end{align}
It should be observed that, unlike in the static case $\nu=1$,
$M_+$ and $M_-$ need not have opposite signs.

The final expression $J_-^{(\ti \psi)}=- J_+ ^{(\psi)}$ is not accidental, but a consequence of the conservation of Komar integrals. 
Since the present spacetime possesses exact Killing symmetries rather than mere asymptotic symmetries, 
the mass and the angular momentum can alternatively be expressed by the integrals
\begin{align}
\label{}
M=-\frac{3}{32\pi G }\int _{S^{3}} \star \D \xi _{(t)} \,, \qquad 
J^{(\psi)}= \frac 1{16\pi G} \int _{S^{3}} \star \D \xi_{(\psi)} \,,
\end{align}
where the same symbols $\xi _{(t)}=\partial/\partial t$ and  $\xi_{(\psi)} =\partial/\partial \psi$ are used for the corresponding Killing one-forms. For the stationary and axial Killing fields considered here,
$\xi^\mu\nabla_\mu\phi=0$. Hence, 
using the Einstein equations (\ref{Eineq}), we have $\nabla_\nu \nabla^\mu \xi^ \nu =R^\mu{}_\nu \xi^\nu =0$. 
Integrating this over the spatial hypersurface extending from $x=-\infty $
to $x=+\infty $, we have 
\begin{align}
\label{Komarintpm}
\int _{S^3_{+\infty}}\star \D \xi + \int _{S^3_{-\infty}}\star \D \xi =0 \,,
\end{align}
where $S^3_{\pm \infty}$ denote the three-spheres at $x\to \pm \infty$. 
By taking $\xi=\partial/\partial \psi$,  the Komar angular momentum formula gives  $J_-^{(\ti \psi)}=- J_+ ^{(\psi)}$. 

The corresponding relation between the masses has the same origin but is
more involved, since the asymptotically non-rotating Killing vector at the
second end is
\begin{align}
\label{}
\frac{\partial }{\partial\ti t }=\frac{e^{-2\pi \beta}D_+}{D_-}\left(
\frac{\partial }{\partial t}+\omega_{-\infty}\frac{\partial}{\partial\psi}\right) \,.
\end{align} 
Substituting this into (\ref{Komarintpm}), one finds a Smarr-like relation 
\begin{align}
\label{}
M_-=  \frac{e^{-2\pi \beta}D_+}{D_-}\left(-M_++\frac 32 \omega_{-\infty} J_+^{(\psi)} \right)\,. 
\end{align}

\subsection{Throat}
\label{sec:throat}

We next turn our attention  to the throat, which bridges the two asymptotic ends. 
The area of a $\{t={\rm const}, u={\rm const.}\}$ three-dimensional group orbit is given by
\begin{align}
\label{area}
\ma A(u)=2\pi^2 m^{3/2} e^{\frac32\beta u}
\sqrt{\frac{H_1(u)H_2(u)}{\sin^3u}}\,.
\end{align}
Within the present cohomogeneity-one foliation, 
we define the throat as a minimum of $\ma A(u)$.
Accordingly, its location $u=u_{\rm th}$ satisfies
\begin{align}
\label{throatcondition}
\left.3\beta-3\cot u+\frac{\D}{\D u}\ln\bigl(H_1H_2\bigr)\right|_{u=u_{\rm th}}
=0\,.
\end{align}
Although this equation cannot in general be solved explicitly, the Toda
structure allows us to prove that the throat exists uniquely.
To this end, we define
\begin{align}
\label{P1P2}
P_1(u) \equiv\ln\left(\frac{2H_1(u)^2}{H_2(u)}\right)\,,\qquad 
P_2(u)\equiv \ln\left(\frac{2m^3e^{3\beta u}H_2(u)^2} {j^2\sin^3u\,H_1(u)}
\right)\,,
\end{align}
satisfying 
\begin{align}
\label{P1P2eq}
\frac{\D^2P_1}{\D u^2}=
\frac{2}{\sin^2u}\left(1-2e^{-P_1}+e^{-P_2} \right)\,,
\qquad 
\frac{\D^2P_2}{\D u^2}
&=\frac{2}{\sin^2u}\left(1+e^{-P_1}-2e^{-P_2}\right)\,.
\end{align}
These relations are directly inherited from the Toda $\tau$-functions and their 
bilinear relations (\ref{taueq}).
In terms of these variables, the area (\ref{area}) can be expressed compactly as
\begin{align}
\label{AP}
\ma A(u) =\pi^2 j e^{[P_1(u)+P_2(u)]/2}\,,
\end{align}
and hence
\begin{align}
\label{A2uu}
\frac{\D^2}{\D u^2}\ln \ma  A(u)=\frac{1}{\sin^2u}\left(2-e^{-P_1}-e^{-P_2}\right)\,.
\end{align}
To demonstrate the uniqueness of the throat, 
we now prove the following lemma. 

\begin{lemma}
The functions $P_1$ and $P_2$ defined above satisfy
\begin{align}
\label{Ppositive}
P_1(u)>0\,,\qquad P_2(u)>0 \qquad (0<u<\pi)\,.
\end{align}
\end{lemma}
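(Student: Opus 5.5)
The plan is a maximum-principle argument applied to the coupled system (\ref{P1P2eq}), using the boundary data that follow from (\ref{metricfunbound}). First I determine the endpoint behaviour. Since $H_1(0)=H_2(0)=1$ and $H_1(\pi)^2=H_2(\pi)$, we get $P_1\to\ln 2>0$ as $u\to0$ and as $u\to\pi$. The remaining factors in $P_2$ are finite and positive at both ends, while $\sin^3u\to0$ in the denominator, so $P_2\to+\infty$ at both ends. Consequently $P_2$ attains its infimum $m_2$ at some interior point $u_2$. The infimum $m_1$ of $P_1$ is either attained at an interior point $u_1$ or equals $\ln 2$ as an endpoint limit.

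Next I set $X\equiv e^{-m_1}$ and $Y\equiv e^{-m_2}$, which are the suprema of $e^{-P_1}$ and $e^{-P_2}$. At the interior minimum $u_2$ of $P_2$ we have $P_2''(u_2)\ge0$, and (\ref{P1P2eq}) gives $2Y\le 1+e^{-P_1(u_2)}\le 1+X$. There are two cases for $P_1$.

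\begin{itemize}
\item If the minimum of $P_1$ is interior, then $P_1''(u_1)\ge0$ gives $2X\le1+e^{-P_2(u_1)}\le1+Y$. Combining the two inequalities, $4X\le2+2Y\le3+X$, so $X\le1$, and symmetrically $Y\le1$.
\item If instead $m_1=\ln2$, then $X=1/2$ and $2Y\le 3/2$, so $Y<1$. This already gives $P_1\ge\ln 2>0$ and $P_2>0$, and we are done.
\end{itemize}

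Hence in all cases $P_1\ge0$ and $P_2\ge0$ on $(0,\pi)$.

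The main obstacle is upgrading these weak inequalities to the strict ones in (\ref{Ppositive}). I would handle this through the equality case together with ODE uniqueness. In the first case, suppose $X=1$. Then every inequality in the chain is saturated, so $Y=1$ and $e^{-P_2(u_1)}=1$. Thus $P_1(u_1)=P_2(u_1)=0$, and both functions are at interior minima there, so $P_1'(u_1)=P_2'(u_1)=0$. The same conclusion follows at $u_2$ if we start from $Y=1$.

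The constant pair $P_1\equiv P_2\equiv0$ solves (\ref{P1P2eq}), since $1-2+1=0$. The system is regular on $0<u<\pi$, so uniqueness for the initial value problem forces $P_1\equiv0$. This contradicts $P_1\to\ln2$ at the endpoints. Therefore $X<1$ and $Y<1$, i.e., $P_1>0$ and $P_2>0$ throughout $0<u<\pi$.
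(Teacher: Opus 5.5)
Your proof is correct and essentially the same as the paper's: both evaluate (\ref{P1P2eq}) at interior minima to obtain a sign contradiction, and both rule out the borderline configuration $P_1=P_2=0$ with vanishing derivatives by uniqueness of the initial-value problem. The only difference is organizational: you combine the two extremal inequalities $2X\le 1+Y$ and $2Y\le 1+X$ instead of using the $P_1\leftrightarrow P_2$ symmetry to reduce to $p_1\le p_2$, and you also treat explicitly the case where the infimum of $P_1$ is only approached at the endpoints.
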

\noindent{\it Proof.}
The endpoint behavior inherited from the asymptotically flat geometry is
\begin{align}
\label{Pendpoints}
P_1\to \ln 2\,,
\qquad P_2\to +\infty \qquad (u\to0,\pi)\,.
\end{align}
Suppose, to the contrary, that either $P_1$ or $P_2$ becomes nonpositive somewhere in the interior.
Define
\begin{align}
p_i\equiv\min_{0<u<\pi}P_i(u)
\qquad (i=1,2)\,.
\end{align}
Although $P_1$ and $P_2$ have different definitions, the system (\ref{P1P2eq}) is invariant under the interchange $P_1\leftrightarrow P_2$. 
Moreover, the following argument uses only this system and the fact that
both $P_1$ and $P_2$ are positive near the two endpoints.
It therefore suffices to consider the case 
\begin{align}
p_1\le p_2\,,\qquad p_1\le0\,.
\end{align}
Since $P_1$ is positive near both endpoints, its minimum is attained at
some interior point $u=u_*$, where
\begin{align}
\label{P1us}
P_1(u_*)=p_1\,,\qquad
\left.\frac{\D P_1}{\D u}\right|_{u=u_*}=0\,,\qquad
\left.\frac{\D^2 P_1}{\D u^2}\right|_{u=u_*}\ge 0\,.
\end{align}
Furthermore,
\begin{align}
P_1(u_*)=p_1\le p_2\le P_2(u_*)\,.
\end{align}

Suppose first that $P_1(u_*)<0$. 
Due to $P_1(u_*)\le P_2(u_*)$, we have $e^{-P_2(u_*)}\le e^{-P_1(u_*)}$. 
Equation (\ref{P1P2eq}) therefore gives
\begin{align}
\left.\frac{\D^2 P_1}{\D u^2}\right|_{u=u_*}
&=
\frac{2}{\sin^2u_*} \left(1-2e^{-P_1(u_*)}+e^{-P_2(u_*)}\right)
\le \frac{2}{\sin^2u_*}  \left( 1-e^{-P_1(u_*)} \right) <0\,,
\end{align}
which contradicts the last condition in (\ref{P1us}).
Thus, $p_1$ cannot be negative.

It remains to exclude the limiting possibility $p_1=0$.
In this case, $P_1(u_*)=0$ and $P_2(u_*)\ge0$, so that
\begin{align}
0 \le \left.\frac{\D^2 P_1}{\D u^2}\right|_{u=u_*}
&=\frac{2}{\sin^2u_*} \left(-1+e^{-P_2(u_*)}\right) \le0\,.
\end{align}
Hence all inequalities must be saturated, which implies $P_2(u_*)=0$. 
Since $p_2\ge p_1=0$, this means that $u_*$ is also a minimum of $P_2$.
Consequently,
\begin{align}
P_1(u_*)=P_2(u_*)=0\,,
\qquad
\left.\frac{\D P_1}{\D u}\right|_{u=u_*}
=\left.\frac{\D P_2}{\D u}\right|_{u=u_*}=0\,.
\end{align}
With these initial data, the uniqueness of the initial-value problem for the coupled second-order equations 
(\ref{P1P2eq}) implies that $P_1=P_2=0$ identically, in contradiction with the endpoint behavior (\ref{Pendpoints}).
Thus, equation (\ref{Ppositive}) follows.
\qed

\bigskip 
Since we have established the positivity of $P_i$, 
it then follows immediately from (\ref{A2uu}) that
\begin{align}
\label{logAconvex}
\frac{\D^2}{\D u^2}\ln \ma A(u)>0 \qquad (0<u<\pi)\,.
\end{align}
Hence, $\ln{\cal A}(u)$ is strictly convex throughout the wormhole
interior.  Moreover, since $H_1$ and $H_2$ remain finite and positive
at both asymptotic ends, we have 
\begin{align}
\frac{\D}{\D u}\ln{\ma A}=-\frac{3}{2u}+O(1)\quad (u\to0+)\,,
\qquad 
\frac{\D}{\D u}\ln{\cal A} =\frac{3}{2(\pi-u)}+O(1) 
\quad (u\to\pi-)\,.
\end{align}
Since $\D\ln{\cal A}/\D u$ is strictly increasing, it vanishes
precisely once in $0<u<\pi$.
Consequently, ${\cal A}(u)$ possesses a unique stationary point,
which is necessarily a strict minimum.  The spacetime therefore possesses a unique throat, whose location
is determined by (\ref{throatcondition}).

Having established the positivity of $P_i$, we obtain a universal upper bound 
\begin{align}
\label{JAthineq}
\frac{G|J ^{(\psi)}|}{{\cal A}_{\rm th}}=\frac 1{8\pi} 
e^{-[P_1(u_{\rm th})+P_2(u_{\rm th})]/2}<\frac 1{8\pi} \,, 
\end{align}
where $J$ refers to $J_+ ^{(\psi)}$ or $J_- ^{(\tilde\psi)}$. 
This means that once the area of the throat is fixed, 
the angular momentum of the spacetime cannot be arbitrarily large. 
We will return to this point in section \ref{sec:MP}.


\subsection{Ergoregion}

Because the two asymptotic regions rotate relative to each other, there is no single stationary Killing vector that is canonically normalized and nonrotating at both ends. Consequently, the notion of an ergoregion is asymptotic-end dependent. 

For definiteness of the argument, we restrict ourselves to 
 the ergoregion defined with respect to the
canonically normalized stationary Killing field $\partial/\partial t$ at the $u=0$ end. 
The ergosurface is then determined by 
\begin{align}
\label{calF}
\ma F(u)\equiv \sin(\pi\nu)W(u)+e^{-3\beta u}D_-\sin(\nu u)=0\,,
\end{align}
where $g_{tt}=-e^{\beta u}\ma F(u)/[\nu D_- \sin u H_1(u)]$ and hence the ergoregion corresponds to $\ma F(u)<0$. 
To analyze the zeros of ${\cal F}$, we first establish the positivity
of $W$ in the interior.
Due to $1<\nu<2$ and 
\begin{align}
\frac{\D W}{\D u}=
\frac{\nu^2+9\beta^2}{\nu}
e^{-3\beta u}\sin(\nu u)\,,
\end{align}
the function $W$ increases for $0<u<\pi/\nu$ and decreases for $\pi/\nu<u<\pi$.
Together with $W(0)=0$ and $W(\pi)=e^{-3\pi\beta}D_+>0$,
this implies $W(u)>0$ throughout $0<u<\pi$.
Consequently, since $\sin(\pi\nu)<0$ and $D_->0$, we have
${\cal F}(u)<0$ for $\pi/\nu\leq u<\pi$.
Any ergosurface must therefore lie in $0<u<\pi/\nu$.

We now show that there is precisely one such surface.
On this interval, we have 
\begin{align}
\frac{e^{3\beta u}}{\sin(\nu u)}
\frac{\D \ma F}{\D u}
=\nu D_-\cot(\nu u)-3\beta D_-+\frac{9\beta^2+\nu^2}{\nu}\sin(\pi\nu)\,.
\end{align}
Due to $D_->0$, 
the right-hand side is strictly decreasing from $+\infty$ to $-\infty$.
The prefactor on the left-hand side is positive.
Consequently, there is precisely one point
$u_{\rm m}\in(0,\pi/\nu)$ at which $\D \ma F/\D u=0$. It follows that 
$\ma F$ is strictly increasing for $0<u<u_{\rm m}$ and strictly decreasing for $u_{\rm m}<u<\pi/\nu$.
Moreover,
\begin{align}
\ma F(0)=0\,,
\qquad
\ma F\left(\pi/\nu\right)
=\sin(\pi\nu) \left(1+e^{-3\pi\beta/\nu}\right)<0\,.
\end{align}
Thus, $\ma F(u_{\rm m})>0$, and its decreasing branch crosses zero
precisely once.
Together with the negativity of $\ma F$ for $\pi/\nu\leq u<\pi$,
this proves that the ergosurface is unique and satisfies
\begin{align}
0<u_{\rm erg}<\frac{\pi}{\nu}\,.
\end{align}
We therefore conclude
\begin{align}
\ma F(u)>0
\quad (0<u<u_{\rm erg})\,,
\qquad
\ma F(u)<0
\quad (u_{\rm erg}<u<\pi)\,.
\end{align}
This shows that the ergoregion associated with $\partial/\partial t$
extends from the ergosurface all the way to the opposite asymptotic end $u=\pi$.

Note also that the zero of $\ma F$ is simple, with
$\left.\D \ma F/\D u\right|_{u=u_{\rm erg}}<0$. 
This means that the ergosurface is not evanescent-type~\cite{Gibbons:2013tqa}.
Evanescent ergosurfaces are known to lead to strong trapping effects and associated instabilities
for various black objects and solitons in five dimensions~\cite{Eperon:2016cdd}.

\subsection{Geodesics}
\label{sec:geodesics}

We consider geodesics  governed by 
$p^\mu\nabla_\mu p^\nu=0$ and $p^\mu p_\mu=-\delta$, 
where $\delta =1$ for timelike geodesics, $\delta =0$ for null geodesics and 
$\delta=-1$ for spacelike geodesics. 
We analyze the geodesic motion by the Hamilton-Jacobi equation
\begin{align}
\label{}
-\frac{\partial \ma S}{\partial \lambda}=\frac 12 g^{\mu\nu}\frac{\partial \ma S}{\partial x^\mu }\frac{\partial \ma S}{\partial x^\nu }\,,
\end{align}
where $\lambda$ is an affine parameter of the geodesics. 
Considering the separable form 
\begin{align}
\label{}
\ma S= \frac 12 \delta \lambda - E t +L_\varphi \varphi +L_\psi \psi +\ma S_u(u)+\ma S_\vartheta (\vartheta ) \,, 
\end{align}
one finds the decoupled equations 
\begin{align}
\label{Sth}
\left(\frac{\D \ma S_\vartheta }{\D \vartheta}\right)^2 =\,& \ma K-\frac{(L_\varphi- L_\psi \cos\vartheta)^2}{\sin^2\vartheta } \,, \\
\label{Su}
\left(\frac{\D \ma S_u }{\D u}\right)^2 =\,& \frac{me^{\beta u}H_1(u)}{4\sin^3 u} \ma R(u) \,, 
\end{align}
where $\ma K(\ge 0)$ is a separation constant and 
\begin{align}
{\cal R}(u)\equiv e^{2\beta u}H_2(u)\bigl[E-L_\psi\omega(u)\bigr]^2-\delta
-\frac{4e^{-\beta u}\sin u}{m}\left[\frac{H_1(u)}{H_2(u)}L_\psi^2+\frac{\ma K}{H_1(u)}\right]\,,
\end{align}
Equation (\ref{Sth}) corresponds to the standard angular equation associated with the ${\rm SU}(2)$-invariant three-sphere. 
The ``radial'' equation takes the form
\begin{align}
\left(\frac{\D u}{\D\lambda}\right)^2
=\frac{4e^{-\beta u}\sin^3u}{mH_1(u)} {\cal R}(u)\,.
\end{align}
Since $H_1$ and $H_2$ are strictly positive and all metric functions
are smooth for $0<u<\pi$, ${\cal R}(u)$ has no singularity in the
wormhole interior.  Its simple zeros correspond to ordinary radial turning points.
Multiple zeros may instead describe constant-$u$ geodesics or limiting
orbits, but do not constitute singularities of the geodesic equations. 

It remains to examine the two asymptotic ends.
Since ${\cal R}(u)$ remains finite there, the radial equation gives
\begin{align}
\left(\frac{\D u}{\D\lambda}\right)^2
&=O(u^3)
\quad (u\to0)\,,
\qquad 
\left(\frac{\D u}{\D\lambda}\right)^2=O\bigl((\pi-u)^3\bigr)
\quad (u\to\pi)\,.
\end{align}
Consequently,
\begin{align}
\left|\frac{\D\lambda}{\D u}\right|
&\gtrsim u^{-3/2}
\quad (u\to0)\,,
\qquad 
\left|\frac{\D\lambda}{\D u}\right|
\gtrsim(\pi-u)^{-3/2}
\quad (u\to\pi)\,,
\end{align}
and the affine parameter required to reach either asymptotic end
diverges.  

To exclude an endpoint at finite affine parameter in the interior, consider
any compact radial interval $[u_1,u_2]\subset(0,\pi)$.
The conserved quantities give
\begin{align}
\frac{\D t}{\D\lambda} =e^{2\beta u}H_2(u)
\bigl[E-\omega(u)L_\psi\bigr]\,,
\end{align}
which is bounded on this interval.
The normalization condition likewise bounds the positive-definite norm
of the spatial projection of the tangent vector.
Thus, in regular coordinate charts on the three-sphere, the tangent vector
remains bounded, and the geodesic equations admit continuation through
any putative interior endpoint at finite affine parameter.
Together with the infinite affine distance to both asymptotic ends,
this establishes geodesic completeness.

\subsection{Reflection symmetric case: $\beta =0$}

In the special case $\beta =0$, the metric is considerably simplified. 
Explicitly, we have 
\begin{align}
\label{}
H_1(u)=\frac{2\sin(\nu u/2)\sin[\nu(\pi-u)/2]}{\nu\sin(\nu\pi/2)\sin u}\,, \qquad 
H_2(u)=\frac{\sin^2(\nu u/2)+\sin^2[(\pi-u)\nu/2]}{\sin^2 (\pi \nu/2)}\,, \qquad 
W(u)=2 \sin^2 (\nu u/2)\,, 
\end{align}
with 
\begin{align}
\label{}
k=-\frac{\cot (\pi \nu/2)}{\nu}\,, \qquad 
j=\frac{m\nu \sqrt{-m\nu \cot(\pi\nu/2)}}{\sin (\pi \nu/2)}\,, \qquad 
D_+=D_- = 2 \sin^2 (\pi\nu/2) \,, \qquad 
\omega_{-\infty} =2\sqrt{\frac{-2 \sin(\pi \nu)}{m\nu}}\,. 
\end{align}
In particular, we have $H_1(\pi-u)=H_1(u)$, $H_2(\pi-u)=H_2(u)$ and 
$\omega(\pi-u)=\omega_{-\infty}-\omega(u)$. 
Thus, $H_1$ and $H_2$ are symmetric about $u=\pi/2$, whereas
$\omega-\omega_{-\infty}/2$ is antisymmetric.
These identities imply that the metric is invariant under
$ (t,u,\vartheta,\varphi,\psi) \mapsto (t,\pi-u,\pi-\vartheta,\varphi,-\psi+\omega_{-\infty}t)$. 
The geometry with $\beta=0$ therefore admits a reflection symmetry interchanging
the two asymptotic ends.
Indeed, the ADM masses at the two ends coincide
\begin{align}
\label{}
M_+=M_-=-\frac{3 \pi m}{8G}\nu \cot (\pi \nu/2)\,. 
\end{align}
Note that $M_+=M_-$, whereas the angular momenta at the two ends have
opposite signs, $J_+^{(\psi)}=-J_-^{(\tilde\psi)}$.
This difference originates from the nonvanishing relative rotation
$\omega_{-\infty}$ between the two asymptotic frames.

In particular, the throat equation (\ref{throatcondition}) is solved to give 
$u_{\rm th}=\pi/ 2$, hence
\begin{align}
\label{}
\ma A_{\rm th}=2\pi^2 m^{3/2} \sqrt{\frac{\tan(\pi \nu/4)}{\nu[1+\cos(\pi \nu /2)]}}\,. 
\end{align}
From (\ref{calF}), the locus of the ergosurface is $u_{\rm erg}=\pi (2-\nu)/\nu$. 
Its position relative to the throat $u_{\rm th}=\pi/2$ is then summarized as in table~\ref{table:ergoth}. 
Thus, the ergosurface crosses the throat at $\nu=4/3$.

\begin{table}[t]
\begin{center}
\begin{tabular}{c|c}
\hline
$\nu$ & ~Location of the ergosurface \\ \hline\hline
$1<\nu<4/3$ & $u_{\rm th}<u_{\rm erg}<\pi$ \\\hline
$\nu=4/3$ & $u_{\rm erg}=u_{\rm th}$ \\\hline
$4/3<\nu<2$ & $0<u_{\rm erg}<u_{\rm th}$ \\ \hline
\end{tabular}
\end{center}
\caption{Range of $\nu$ and the relative location of the throat and the ergosurface in the 
$\beta=0$ case.}
\label{table:ergoth}
\end{table}

Moreover, one can verify the Smarr-like formula
\begin{align}
\label{}
M_+=\frac 32 \omega_{\rm th} J_+^{(\psi)} \,,  
\end{align}
where $\omega_{\rm th}=\omega (\pi/2)=\omega_{-\infty}/2$ denotes the 
angular velocity of the throat \cite{Dzhunushaliev:2013jja}.

\subsection{Extremal Myers-Perry limit: $\nu \to 2$}
\label{sec:MP}

Let us investigate the boundary $\nu\to2$ of the regular parameter region (see figure~\ref{fig:parameter}).  
We show that this limit gives rise to the extremal Myers-Perry solution with 
equal angular momenta.\footnote{In appendix~\ref{app:EllisGibbons}, we discuss another, singular scaling limit that provides a boundary configuration $E_{\rm L}=0$
with a nontrivial scalar field. }  

Setting
\begin{align}
\label{MPlimit1}
\nu=2-\varepsilon\,,\qquad \varepsilon\to 0 +\,,
\end{align}
we find
\begin{align}
D_\pm=\frac{\pi^2}{2}\varepsilon^2 \pm3\pi\beta+O(\varepsilon^3,\beta\varepsilon,\beta^2)\,.
\end{align}
The simultaneous regularity conditions $D_\pm>0$ therefore require
$\beta=O(\varepsilon^2)$.  We accordingly write
\begin{align}
\beta=\frac{\pi}{6}\hat\beta \varepsilon^2\,,
\qquad 
D_\pm=\frac{\pi^2}{2}\left(1\pm\hat\beta\right)\varepsilon^2+O(\varepsilon^3)\,,
\label{MPlimit2}
\end{align}
and hence $-1<\hat\beta<1$. 
To keep the ADM mass and angular momentum finite and nonzero, we further scale
\begin{align}
\label{MPlimit3}
m=\pi\left(1-\hat\beta\right)a^2\varepsilon\,,
\end{align}
where $a$ is a constant. 
With this scaling, the limit of the solution is free from the scaling parameter $\varepsilon$, 
and the scalar field vanishes $\phi\to 0$. 
After setting $x=r^2-a^2$ on the $x>0$ side, 
the metric reduces to the exterior region of an extremal equal-angular-momenta
Myers-Perry solution (\ref{MPmetric}) with $\mu=4a^2$. 
Remarkably, all dependence on $\hat\beta$ drops out. 
The same limiting geometry is obtained from the opposite asymptotic end
after passing to its canonically normalized nonrotating coordinates.

Thus, although $\hat\beta$ parametrizes different directions by which the
regular wormhole family approaches the corner $(\beta,\nu)=(0,2)$ in figure~\ref{fig:parameter},
it does not survive as a parameter of the limiting extremal Myers-Perry geometry.

Our result analytically explains the results of \cite{Uemichi:2026dzb}. 
The asymmetry parameter used in \cite{Uemichi:2026dzb}  is
related to our parameters by
\begin{align}
a_{-\infty}
=2\pi\beta+\ln\frac{D_-}{D_+} \,. 
\end{align}
Hence, in the above scaling limit, $\hat\beta=-\tanh(a_{-\infty}/2)$.
It has been reported in \cite{Uemichi:2026dzb} that the nonextremal 
Myers-Perry solution is never approached from the wormhole solution. 
This can be understood directly from the Toda description. 
For the regular wormhole branch, the Liouville energy is negative,
$E_{\rm L}=-4m^2<0$, 
whereas the nonextremal Myers-Perry branch has
$E_{\rm L}=\mu_0^2=\mu(\mu -4a^2)>0$ (see appendix~\ref{app:MP}). 
A continuous limit from the wormhole family to a vacuum Myers-Perry solution can therefore occur only at the common boundary $E_{\rm L}=0$. On the Myers-Perry branch, this condition is precisely the extremal limit $\mu_0\to0$. 
The scaling limit discussed above realizes this boundary explicitly and yields the exterior region of the extremal equal-angular-momenta Myers-Perry black hole.

In section~\ref{sec:throat}, we derived the universal upper bound (\ref{JAthineq}) on the angular momentum per unit throat area. 
For the Myers-Perry family with equal angular momenta, we have an analogous inequality
\begin{align}
\label{JAHineq}
\frac{G|J^{(\psi)}|}{A_{\rm H}} \le \frac{1}{8\pi}\,,
\end{align}
where equality is attained in the extremal limit. 
It should be emphasized, however, that $A_{\rm H}$ is the area of the event horizon rather than that of a throat. 
Indeed, within the equal-angular-momenta Myers-Perry family, the area (\ref{maAMP}) of the constant-$t$, constant-$r$ group orbits is strictly increasing throughout the exterior region, and $A_{\rm H}$ is its limiting value at the horizon. Thus, $A_{\rm H}$ and $A_{\rm th}$ have different geometric definitions before the scaling limit is taken.

Some care is therefore required when comparing (\ref{JAthineq}) with (\ref{JAHineq}). 
In the scaling limit (\ref{MPlimit1})--(\ref{MPlimit3}), one finds that the area function (\ref{area}) of the wormhole satisfies
\begin{align}
\ma A(u) \to  8\pi^2 a^3\,,
\end{align}
for any fixed $0<u<\pi$. 
In particular, the throat area approaches the area of the extremal Myers-Perry horizon,
$\ma A_{\rm th}\to A_{\rm H}$. 
Crucially, this limiting value is obtained {\it without using the throat condition} (\ref{throatcondition}). 
Indeed, the latter degenerates in this scaling limit and enters only at subleading order in determining the location of the throat. 
Thus, despite the different geometric meanings of $\ma A_{\rm th}$
and $A_{\rm H}$, the bounds (\ref{JAthineq}) and (\ref{JAHineq})
coincide in the extremal Myers-Perry limit.

\section{Conclusion}
\label{sec:conc}

In this paper, we have constructed an exact solution describing a rotating wormhole with equal angular momenta in the five-dimensional Einstein theory coupled to a phantom scalar field. 
The solution connects two asymptotically flat regions and provides an exact rotating generalization of the Ellis-Bronnikov wormhole. 
The availability of a closed-form solution is particularly valuable in this problem, since it allows us to establish analytically a number of properties that have previously been observed through numerical investigations~\cite{Dzhunushaliev:2013jja,Uemichi:2026dzb}. 
In particular, we have shown that the spacetime is horizonless, stably causal, free of curvature singularities, and geodesically complete. 
We have also clarified the asymptotic charges at the two ends and the relative rotation between their asymptotic rest frames, and have established the existence and uniqueness of the ergosurface associated with a stationary frame that is nonrotating at one end.

The central ingredient underlying our construction is the integrable $A_1\oplus A_2$ Toda structure governing the radial field equations. 
Its role, however, goes considerably beyond providing a convenient method of obtaining the exact solution. 
The Toda variables enable us  to implement the regularity and asymptotic-flatness conditions analytically and thereby determine the parameter region in which the local solution describes a globally regular wormhole. 
More importantly, the differential and bilinear identities inherited from the Toda equations provide direct control over the global geometry. 
They allow us to prove the positivity properties of the metric functions, establish the existence and uniqueness of the wormhole throat, and derive a sharp upper bound on the angular momentum for a fixed throat area. 
These results would be difficult to infer  from the final expression of the  metric (\ref{metric}). 
Thus, in the present problem, Toda integrability serves not only as a tool for solving the field equations, but also as an efficient framework for extracting nontrivial geometric information from the resulting spacetime.
As demonstrated in appendix~\ref{app:odd}, 
the analogous Cartan-type Toda structure does not persist in higher odd dimensions
$D\ge 7$, even though the cohomogeneity-one ansatz is preserved.

Several questions remain open.
It would be interesting to understand whether the present solution admits an alternative derivation, for instance via a Kerr-Schild-type formulation~\cite{Nozawa:2025yac}.
A more pressing physical question concerns the stability and quasinormal-mode spectrum \cite{Khoo:2024yeh} of the spacetime.
Since the present solution possesses a ${\rm U}(2)$ spatial symmetry, its perturbation equations may be analyzed by adapting the framework developed in~\cite{Murata:2007gv}.
It would be particularly interesting to provide a semi-analytic check of the numerical stability results reported in~\cite{Dzhunushaliev:2013jja}.
The presence of an ergoregion also raises questions concerning rotational energy extraction.
In particular, it would be worthwhile to investigate the Penrose process~\cite{Nozawa:2005eu}, as well as the possibility of superradiant amplification and associated instabilities.
We leave these issues for future work.

\acknowledgments
We would like to thank Keiya Uemichi for useful comments. 
This work is supported by KAKENHI Grant Number 25K07309.

\appendix

\section{Toda system}
\label{app:Lax}

\subsection{Lax representation}

We briefly exhibit the Lax representation underlying the integrability used in section~\ref{sec:solvingToda}.
Although this structure is standard for finite-dimensional Toda systems, 
it is useful here because it gives a direct interpretation of the conserved
quantities and of the spectral parameters entering the explicit solution.
For details, we refer the reader to the comprehensive review \cite{Olshanetsky:1981dk}.

The spectral interpretation of a Lax representation follows from
the auxiliary linear system
\begin{align}
\label{eq:lax1}
\mas L \psi 
=\lambda \psi \,,
\qquad
\frac{\D \psi}{\D z }=\mas M \psi \,, 
\end{align}
where $\psi$ is an auxiliary vector, $\lambda$ is a $z$-independent spectral parameter, 
and $\mas L$ and $\mas M$ are matrices. 
Requiring compatibility as a matrix identity gives the Lax equation
\begin{align}
\label{eq:lax}
\frac{\D \mas L}{\D z} =[\mas M, \mas L] \,. 
\end{align}
Conversely, this equation implies
\begin{align}
\mas L(z)
=U(z)\mas{L}(z_*)U(z)^{-1}\,, \qquad 
\frac{\D U}{\D z}=\mas M U \,, 
\end{align}
where  $U(z_*)=\mathbf{1}$.
Thus, the flow is isospectral, and $\operatorname{Tr}(\mas{L}^n)$
is conserved for every positive integer $n$.

For the Liouville sector,  the following $2\times 2$ matrices 
constitute a Lax pair
\begin{align}
\label{}
\mathscr L_{\rm L} =\left(
\begin{array}{cc}
-q_0'      & 1   \\
 -e^{2q_0}     &   q_0'
\end{array}
\right)\,, 
\qquad 
\mathscr M_{\rm L} =\left(
\begin{array}{cc}
-q_0'      & 1   \\
0    &   q_0'
\end{array}
\right)\,.
\end{align}
The Liouville energy is expressed as $E_{\rm L}={\rm Tr}(\mathscr  L^2_{\rm L})/2$, 
which is conserved as a consequence of the isospectral evolution.
The nonlinear Liouville dynamics are thus encoded in the compatibility
of the auxiliary linear system (\ref{eq:lax1}). This interplay between nonlinear
dynamics and linear spectral theory lies at the heart of the
Lax approach to integrability.

Similarly, a Lax pair for the $A_2$ Toda system is given by
\begin{align}
\label{}
\mathscr L_{\rm T} =\left(
\begin{array}{ccc}
  -q_1'    & 1 & 0  \\
  -e^{2q_1-q_2}     &   q_1'-q_2' & 1 \\
  0 & - e^{2q_2-q_1} & q_2' 
\end{array}
\right)\,, \qquad 
\mathscr M_{\rm T} =\left(
\begin{array}{ccc}
  -q_1'    & 1 & 0  \\
0   &   q_1'-q_2' & 1 \\
  0 & 0 & q_2' 
\end{array}
\right)\,.
\end{align}
The quadratic and cubic invariants are given by 
\begin{align}
\label{}
E_{\rm T} =\frac 14 {\rm Tr}(\mathscr  L_{\rm T} ^2) \,, \qquad 
C_3=-\frac 13 {\rm Tr}(\mathscr  L_{\rm T} ^3) \,. 
\end{align}
The conserved spectral data of the Toda sector are encoded in
the characteristic polynomial
\begin{align}
\label{TodaPoly}
\det(\lambda {\bf 1}-\mathscr  L_{\rm T})
 =\lambda^3-2E_{\rm T}\lambda+C_3\,.
\end{align}
Thus, the three spectral parameters $\lambda_i$ introduced in
(\ref{tau1ansatz}) are precisely the eigenvalues of $\mathscr L_{\rm T}$ with 
\begin{align}
\sum_{i=1}^3\lambda_i=0,\qquad
E_{\rm T}=\frac14\sum_{i=1}^3\lambda_i^2,\qquad
C_3=-\lambda_1\lambda_2\lambda_3\,.
\end{align}
For the parametrization
$\lambda_1=\mu_1$, $\lambda_2=\mu_2$ and
$\lambda_3=-(\mu_1+\mu_2)$ used in the main text, these relations give
\begin{align}
E_{\rm T} =\frac12(\mu_1^2+\mu_1\mu_2+\mu_2^2)\,,\qquad 
C_3 =\mu_1\mu_2(\mu_1+\mu_2)\,.
\end{align}
This makes explicit the relation between the integration constants appearing
in the $\tau$-function solution and the conserved spectral data of the Toda flow. 
Together with the exponential representation of the
$\tau$-functions, the characteristic equation (\ref{TodaPoly}) implies that the $\tau$-functions obey 
the third-order linear equations 
\begin{align}
\label{}
\tau_1'''-2E_{\rm T}\tau_1'+C_3 \tau_1 =0 \,, \qquad 
\tau_2'''-2E_{\rm T}\tau_2'-C_3 \tau_2 =0 \,.
\end{align}
These equations remain valid for degenerate spectra, as can also be
verified directly from the Toda equations and the conserved quantities.

\subsection{Myers-Perry solution from the Toda system}
\label{app:MP}

Since the five-dimensional Myers-Perry solution 
with equal angular momenta falls into the cohomogeneity-one ansatz (\ref{metric0}), 
it is instructive to rederive it in the present framework. 

We solve (\ref{A1Toda}) and (\ref{A2Toda}) with the Hamiltonian constraint for the vanishing scalar charge $Q=0$. 
For the nonextremal black-hole branch, 
we choose the integration constants in (\ref{q0sol}) to be real, thus 
\begin{align}
\label{}
\tau_0(z)=\frac{\sinh (\mu_0 z)}{\mu_0} \,,  \qquad E_{\rm L}=\mu_0^2 >0\,, \qquad \mu_0>0 \,, 
\end{align}
where the other real constant $c_0$ has been set to vanish by using the shift degree of freedom $z\to z+{\rm const}$. 
For $q_i$, we choose the integration constants in (\ref{taui}) as 
\begin{align}
\label{}
\mu_1=2\mu_0\,, \qquad 
\mu_2=-2\mu_0 \,, \qquad 
c_1=\ln \left(\frac{\mu+\mu_0}{j^{2/3}}\right)\,, \qquad 
c_2= \ln \left(\frac{\mu-\mu_0}{j^{2/3}}\right)+i \pi \,, 
\end{align}
where $\mu$ is a real constant satisfying $j^2=\mu (\mu^2-\mu_0^2)$ and $\mu>\mu_0>0$. 
With this choice, the Hamiltonian constraint (\ref{constraint}) is satisfied, and 
the cubic integral vanishes identically $C_3 = 0$. 
Writing 
\begin{align}
\label{}
r^2+a^2 \equiv \frac 12 [\mu+\mu_0\coth (\mu_0 z)]\,, \qquad a\equiv \frac{j}{2\mu}\,, 
\end{align}
we have $\mu_0^2=\mu^2-4a^2 \mu$. 
The exterior region is covered by $0<z<\infty$, with $z\to0+$
corresponding to spatial infinity and $z\to\infty$ to the outer horizon. 
Finally, setting $\omega_0 =1/a$, we recover
\begin{align}
\label{MPmetric}
\D s^2=-\frac{\Delta_r}{\Sigma^2}\D t^2+\frac{r^2(r^2+a^2)}{\Delta_r }\D r^2
+\frac{\Sigma^2}{4(r^2+a^2)}\left(\sigma_3-\frac{2a\mu}{\Sigma^2}\D t\right)
^2+\frac 14 (r^2+a^2)(\sigma_1^2+\sigma_2^2 )\,,
\end{align}
where 
\begin{align}
\label{}
\Delta_r =(r^2+a^2)^2-\mu r^2 \,, \qquad \Sigma^2 = (r^2+a^2)^2+\mu a^2 \,. 
\end{align}
This is the five-dimensional vacuum Myers-Perry metric with equal angular momenta with
\begin{align}
\label{}
M=\frac{3\pi \mu}{8G}\,, \qquad 
J^{(\psi)}=\frac{\pi \mu a}{4G} \,. 
\end{align}
The event horizon corresponds to the largest root of $\Delta_r=0$ and is located at $r=r_+$, where  
\begin{align}
\label{}
r_+^2  =\frac 12 \left(\sqrt{\mu(\mu-4a^2)}+\mu-2a^2 \right)\,. 
\end{align}

Since the area $\ma A(r)$ of the constant $t$ and constant $r$ surfaces is given by 
\begin{align}
\label{maAMP}
\ma A(r)=2\pi^2 \sqrt{(r^2+a^2)[(r^2+a^2)^2+\mu a^2]}\,,
\end{align}
the area of the event horizon $A_{\rm H}=\ma A(r_+)$ is written as
\begin{align}
\label{}
A_{\rm H}=2\pi^2 r_+^{-1}(r_+^2+a^2)^2 \,. 
\end{align}
The surface gravity $\kappa$ and the angular velocity $\Omega_{\rm H}^{(\psi)}$ of the event horizon are given by 
\begin{align}
\label{}
\kappa =\frac{r_+^2-a^2}{r_+(r_+^2+a^2)}\,,\qquad 
\Omega_{\rm H}^{(\psi)}= \frac{2a}{r_+^2+a^2}\,. 
\end{align}
These quantities fulfill the first-law and the Smarr-formula 
\begin{align}
\label{}
\delta M=\frac{\kappa}{8\pi G}\delta A_{\rm H}+\Omega_{\rm H}^{(\psi)}\delta J^{(\psi)}\,, 
\qquad M=\frac 32 \left(\frac{\kappa A_{\rm H}}{8\pi G} +\Omega_{\rm H}^{(\psi)}J^{(\psi)}\right)\,.
\end{align}

The extremal limit is given by $\mu=4a^2$, for which 
\begin{align}
\label{mu0limit}
\mu_0=\sqrt{\mu(\mu-4a^2)}  \to 0 \,, \qquad  
\frac{G|J^{(\psi)}|}{A_{\rm H}}=\frac{a}{8\pi r_+} \to \frac 1{8\pi }\,.
\end{align}
It is worth commenting that 
the extremal solution corresponds to vanishing Liouville energy and 
saturates the upper bound in (\ref{JAHineq}).

The locus of the ergosurface is 
\begin{align}
\label{}
r_{\rm erg}=\sqrt{\mu-a^2} >r_+ \,. 
\end{align}
Contrary to the four-dimensional Kerr solution, the ergosurface does not touch the 
event horizon. 


\section{Curvature tensors}
\label{sec:curvature}

Although curvature invariants do not take particularly illuminating forms
when expressed directly in terms of the metric functions in (\ref{metric}),
they simplify considerably in the $\tau$-function representation
(\ref{metrictau}).

Before discussing the curvature invariants, it is useful to define
\begin{align}
\tilde\tau_1(z)
\equiv \frac{\tau_1(z)}{\tau_0(z)}=\frac{2m\,\tau_1(z)}{\cos(2mz)}\,,
\end{align}
obeying 
\begin{align}
\label{}
\ti \tau_1'{}^2-\ti \tau_1 \ti \tau_1 '' =\frac{\tau_2-\ti \tau_1^2}{\tau_0^2 } \,.
\end{align}
Since $\tau_1$ has a simple zero at each endpoint [see below (\ref{eq:TauEndpointLimits})], the zero of $\tau_1$
is precisely cancelled by the simple zero of $\cos(2mz)$.  Thus,
$\tilde\tau_1$ extends smoothly to both endpoints, with
\begin{align}
\tilde\tau_1(\pm z_0)
=\mp\tau_1'(\pm z_0)
=\sqrt{\tau_2(\pm z_0)}>0\,.
\end{align}
In particular, $\tilde\tau_1$ is finite and strictly positive throughout the
closed interval $-z_0\leq z\leq z_0$.

The scalar curvature then takes the simple and suggestive form
\begin{align}
R
&=
\frac{8\tau_0(z)^3\sqrt{\tau_2(z_0)}}{\ti \tau_1(z)}
(2E_{\rm L}-E_{\rm T})\,,
\end{align}
where $E_{\rm L}$ and $E_{\rm T}$ are the quadratic conserved quantities associated
with the Liouville and Toda sectors, respectively.
Since $\tilde\tau_1$ remains finite and nonzero at the endpoints,
the scalar curvature remains finite in the interior and vanishes $O((z\mp z_0)^3)$ as either asymptotic end is
approached $z\to \pm z_0$.

For the Kretschmann invariant, let us further define
\begin{align}
\Xi\equiv\frac{\tilde\tau_1'}{\tilde\tau_1}\,,
\qquad 
\Theta
\equiv-\frac{\tilde\tau_1''}{\tilde\tau_1}+\frac{2E_{\rm T}-E_{\rm L}}{3}\,.
\end{align}
Since $\tilde\tau_1$ is smooth and nonvanishing on the closed
interval, its derivatives are also finite there. Thus, $\Xi$ and
$\Theta$ remain finite for $-z_0\leq z\leq z_0$.
The Kretschmann invariant can then be written as
\begin{align}
R_{\mu\nu\rho\sigma}R^{\mu\nu\rho\sigma}
=&\,
\frac{32 \tau_2(z_0)\tau_0(z)^4}{\tilde\tau_1(z)^2}\,
\Biggl\{
36\Xi^2\tau_0'{}^2 -12 \left[\Xi(3\Theta+8\Xi^2-E_{\rm T})-C_3\right]\tau_0\tau_0'
\notag  \\
&+\tau_0^2 \left[3\Theta(7\Theta+32\Xi^2-2E_{\rm T}) +16 \Xi^2 (3\Xi^2-2E_{\rm L}+E_{\rm T})
-24 C_3 \Xi+\frac{11}3 (2E_{\rm L}-E_{\rm T})^2-3 E_{\rm T}^2\right]
\Biggr\}\,.
\label{Kret}
\end{align}
All quantities appearing inside the braces are finite at the
endpoints, since $\tau_0(\pm z_0)=0$ and
$\tau_0'(\pm z_0)=\mp1$, while $\Xi$ and $\Theta$ remain finite.
It therefore follows directly from (\ref{Kret}) that
the Kretschmann invariant behaves as
$O((z\mp z_0)^4)$ as $z\to\pm z_0$.
Since  the present solution fulfills  
\begin{align}
\label{}
R^{\mu\nu}R_{\mu\nu}=R^2 \,,  
\end{align}
we obtain
\begin{align}
C_{\mu\nu\rho\sigma}C^{\mu\nu\rho\sigma}
=R_{\mu\nu\rho\sigma}R^{\mu\nu\rho\sigma}-\frac76 R^2\,.
\end{align}
Thus, the Weyl invariant likewise vanishes 
$O((z\mp z_0)^4)$ as $z\to\pm z_0$.

\section{Rotating Ellis-Gibbons limit: $\nu\to 1$ \& $\beta\to \pm \infty$}
\label{app:EllisGibbons}

In section~\ref{sec:MP}, we discussed a scaling limit, which brings the metric 
to the exterior region of the Myers-Perry metric with equal angular momenta. 
In this appendix, we inspect another scaling limit in which the scalar field survives. 

Defining ($\hat m $, $\alpha$, $v$) by 
\begin{align}
\label{}
\hat m = m\beta >0 \,,  \qquad 
\nu= 1+\frac{\alpha }{\pi(1+3\alpha )\beta }\,, \qquad v= \beta u \,,
\end{align}
we take the limit $\beta \to +\infty$, while $\hat m$, $\alpha(>0)$, $v$ are kept fixed. 
In particular, $\nu\to1$ in this limit.

Then, we have 
\begin{align}
\label{}
D_- \to \frac{1}{1+3\alpha } >0 \,, \qquad 
 e^{-3\pi \beta}D_+ \to 1 \,, \qquad 
\beta k\to \alpha \,,
\end{align}
and the limiting solution reduces to 
\begin{align}
\label{EG}
\D s^2 =&\, -\frac{e^{-2v}}{H_2(v)}\D t^2 +\frac{\hat m e^{v}}{4v}\left[
H_1(v) \left(\frac{\D v^2}{v^2}+\sigma_1^2+\sigma_2^2\right)
+\frac{H_2(v)}{H_1(v)}\left(\sigma_3 -\omega(v)\D t \right)^2
\right] \,, 
\\
\phi=&\, \pm \sqrt{\frac 32} v \,, 
\end{align}
where 
\begin{align}
\label{}
W(v)=&\, 1-(1+3v)e^{-3v} \,, 
\\
H_1(v)=&\,  1+3\alpha -\alpha \frac{1-e^{-3v}}{v}\,, 
\\
H_2(v)=&\,  1+\alpha \left[3(2+3\alpha)W(v)+9 v e^{-3v} \right]\,, 
\\
\omega(v)=&\, \frac{2j}{9\hat m^2}\frac{W(v)}{H_2(v)}\,,
\end{align}
and 
\begin{align}
\label{}
j=9 \hat m^{3/2}\sqrt{\alpha }(1+3\alpha)\,. 
\end{align}
Here, we have used the same symbols for $v$-dependent metric functions,
in order to prevent the inflation of notations.

To express this limit in the Toda coordinate, we first shift its origin:
\begin{align}
\label{EGzshift}
\hat z=z-\frac{\pi}{4m}\,, \qquad
v=-2\hat m\hat z\,.
\end{align}
The limit is therefore taken at fixed $\hat z<0$, rather than at fixed $z$.
The Liouville and Toda variables are also well-defined in this limit
\begin{align}
\label{}
e^{q_0(\hat z)}={}&\frac{2\hat m}{v}
=-\frac{1}{\hat z}\,,\qquad 
\tau_1(\hat z)=
\frac{v e^v H_1(v)}{4\hat m j^{2/3}}\,, \qquad
\tau_2(\hat z)=
\frac{e^{2v}H_2(v)}{4j^{4/3}}\,.
\end{align}
The Liouville energy of the original wormhole family,
$E_{\rm L}=-4m^2=-4\hat m^2/\beta^2$, tends to zero.
Evaluating the conserved quantities of the limiting solution gives
\begin{align}
\label{EGintegrals}
E_{\rm L}=0\,, \qquad
E_{\rm T}=Q^2=6\hat m^2\,, \qquad
C_3=-16\hat m^3\,.
\end{align}
Thus, the vanishing Liouville energy results from the cancellation
between the Toda energy and the phantom-scalar contribution to the
Hamiltonian constraint.

The static limit is obtained by taking $\alpha\to0$, for which
$H_1=H_2=1$ and $j=\omega=0$.
Introducing $r^2=\hat m/v$, the solution  reduces to
\begin{align}
\label{EGstatic}
\D s^2={}&
-e^{-2\hat m/r^2}\D t^2
+e^{\hat m/r^2}\left(\D r^2+\frac{r^2}{4}(\sigma_1^2+\sigma_2^2+\sigma_3^2)\right)\,,
\qquad 
\phi= \pm\sqrt{\frac32}\,\frac{\hat m}{r^2}\,. 
\end{align}
This is the five-dimensional version \cite{Martinez:2020hjm}
of the Ellis-Gibbons solution~\cite{Ellis:1973yv,Gibbons:2003yj}, establishing
the interpretation of (\ref{EG}) as its rotating generalization.

The corresponding limit with $\beta\to-\infty$ yields the same
rotating Ellis-Gibbons family after interchanging the two asymptotic
ends and adopting canonically normalized, nonrotating coordinates at
the opposite end. We therefore restrict the explicit analysis to
$\beta\to+\infty$.

Let us briefly examine the geometry of the limiting solution.
For $\alpha>0$ and $v>0$, the metric functions satisfy
\begin{align}
\label{EGmonotonic}
\frac{\D H_1}{\D v}
=\frac{\alpha W(v)}{v^2}>0\,, \qquad
\frac{\D H_2}{\D v}
=9\alpha e^{-3v}
\left[1+3(1+3\alpha)v\right]>0\,.
\end{align}
Since $H_1(0)=H_2(0)=1$ and
$H_1(\infty)=1+3\alpha$,
$H_2(\infty)=(1+3\alpha)^2$, it follows that
\begin{align}
\label{EGpositivity}
1<H_1(v)<1+3\alpha\,, \qquad
1<H_2(v)<(1+3\alpha)^2
\qquad (0<v<\infty)\,.
\end{align}
The metric is consequently smooth and Lorentzian at every finite
positive $v$.
The limit $v\to0$ is an asymptotically flat end, as is manifest in the
coordinate $r^2=\hat m/v$. In the opposite limit, all curvature invariants vanish
sufficiently rapidly. For instance, 
\begin{align}
\label{EGRicci}
R=-\frac{6v^3e^{-v}}{\hat m H_1(v)} \to 0
\qquad (v\to\infty)\,.
\end{align}

One might then be tempted to expect that this limiting geometry also describes a 
traversable wormhole. However, this spacetime is not completely regular. 
To see this, consider the affinely parametrized null geodesics
\begin{align}
\label{}
\ell_\mu \D x^\mu = - \D t+\frac{\sqrt{\hat me^{3v}H_1H_2}}{2v^{3/2}}\D v \,, 
\end{align}
along which $\D v/\D \lambda =2 v^{3/2}e^{v/2}\sqrt{H_2/(\hat m H_1)}$, thus 
\begin{align}
\label{}
\lambda_*-\lambda=\int^\infty_v \frac{e^{-v/2}}{2v^{3/2}}\sqrt{\frac{\hat m H_1(v)}{H_2(v)}}\D v < \infty \,. 
\end{align}
This demonstrates that the $v=+\infty$ surface can be reached within a finite affine parameter. 
Thus, the $v=+\infty $ surface is not asymptotic infinity. Moreover, 
the parallelly propagated Ricci component diverges as $v\to \infty $
\begin{align}
\label{}
R_{\mu\nu}\ell^\mu \ell^\nu =-\frac{6v^3 e^v H_2(v)}{\hat m H_1(v)} \to -\infty \,. 
\end{align}
Thus, the $v=+\infty $ surface represents a parallelly propagated curvature singularity.
It follows that the limiting spacetime is geodesically incomplete, and
$v\to\infty$ does not represent a second regular asymptotically flat end. 
This property is also seen in the static counterpart \cite{Martinez:2020hjm}.

Although the present solution and the extremal Myers-Perry geometry both have vanishing Liouville energy $E_{\rm L}=0$, they correspond to distinct Toda data.  For the present rotating Ellis-Gibbons limit, the conserved quantities are given by (\ref{EGintegrals}), 
whereas the extremal Myers-Perry solution has $E_{\rm T}=C_3=0$.  Equivalently, the characteristic polynomial of the Toda Lax matrix factorizes as
\begin{align}
P_{\rm EG}(\lambda)
 =\lambda^3-2E_{\rm T}\lambda+C_3
 =(\lambda-4\hat m)(\lambda+2\hat m)^2\,,
\end{align}
while $P_{\rm MP}(\lambda)=\lambda^3$ in the extremal Myers-Perry limit.  Thus, although both solutions lie on the vanishing-discriminant locus $\Delta \equiv 32E_{\rm T}^3-27C_3^2 =0$ of the Toda characteristic polynomial, they correspond to different degenerations of the Lax spectrum.


\section{Cohomogeneity-one reduction in odd dimensions}
\label{app:odd}

It is natural to ask whether the Toda formulation developed in five
dimensions extends to higher odd dimensions.
To investigate this question, we consider the Einstein-scalar system
in $D=p+2+n$ dimensions, where $n=2n'+1\ge 3$ is odd:
\begin{align}
\label{}
\ma L^{(D)}=R^{(D)}-\epsilon  (\nabla \phi)^2\,. 
\end{align}
We adopt the following Kaluza-Klein reduction ansatz:
\begin{align}
\label{}
\D s_D^2=X^2 \left\{g_{ab}^{(p+2)}(y)\D y^a \D y^b + Y^{-2(n-1)}\left[\D \tau+A_a(y) \D y^a+A^{\rm FS}_I(w)\D w^I \right]^2
+Y^2 \gamma_{IJ}(w)\D w^I \D w^J  \right\}\,. 
\end{align}
Here, $\gamma_{IJ}(w)$ is  the Fubini-Study metric on $\mathbb{CP}^{n'} $ normalized by 
${}^{(\gamma)}R_{IJ}=(n+1)\gamma_{IJ}$, and $J^{\rm FS}=\frac 12 \D A^{\rm FS}$ is its K\"ahler form.  
The fields $X$, $Y$, $\phi$, and $A_a$, as well as
$g_{ab}^{(p+2)}$, depend only on the coordinates $y^a$.
The coordinate $\tau$ parametrizes the ${\rm U}(1)$ fibre over $\mathbb{CP}^{n'}$. 
With the standard period $2\pi$ for $\tau$, the angular manifold is
the Hopf bundle ${\rm U}(1)\to S^n \to \mathbb{CP}^{n'}$. 
The angular sector therefore admits a ${\rm U}(n'+1)$ isometry.

Writing $F=\D A$,  the $(p+2)$-dimensional effective Lagrangian is~\cite{Ma:2020kwc}
\begin{align}
\label{}
L^{(p+2)}= &\sqrt{-g^{(p+2)}} X^{ D-2} \Bigl\{R^{(p+2)}-\epsilon (\nabla \phi)^2 +( D-1)(D-2)(\nabla \ln X)^2
-n(n-1)(\nabla \ln Y)^2\notag \\
&- Y^{-2}\left[(n-1)Y^{-2n}-n^2+1\right]-\frac 14 Y^{-2(n-1)}F_{ab}F^{ab}\Bigr\}\,. 
\end{align}
All contractions and covariant derivatives are taken with respect to $g_{ab}^{(p+2)}$.

We now specialize to the $p=0$ ($D=2+n$) case, for which the equation for $F_{ab}$ can be 
solved as 
$F_{ab}=2 j \epsilon_{ab}^{(2)}X^{-n}Y^{2(n-1)}$, where $j$ is a constant and 
$\epsilon_{ab}^{(2)}$ is the volume form of $g_{ab}^{(2)}$. 
Performing the Routh reduction gives 
\begin{align}
\label{}
L^{(2)}= &\sqrt{-g^{(2)}} \Biggl\{X^{n} \Bigl(R^{(2)}-\epsilon (\nabla \phi)^2 +n(n+1)(\nabla \ln X)^2
-n(n-1)(\nabla \ln Y)^2\notag \\
&- Y^{-2}\left[(n-1)Y^{-2n}-n^2+1\right]\Bigr)-2X^{-n}Y^{2(n-1)}j^2 \Biggr\}\,. 
\end{align}

We next impose stationarity and take all remaining functions to
depend only on a radial coordinate $z$, with $y^a=(t,z)$.
Parametrizing the two-dimensional metric and the scalar fields $X$, $Y$ as
\begin{align}
\label{}
\D s_2^2=X^{-2} \left(-e^{2A(z)}\D t^2+N^2(z)\D z^2 \right)\,, \qquad 
X=e^{[(n-1)B+C]/n}\,, \qquad 
Y=e^{(B-C)/n} \,,
\end{align}
we obtain a stationary cohomogeneity-one ansatz 
\begin{align}
\label{}
\D s^2= -e^{2A(z)}\D t^2+N^2(z)\D z^2+e^{2C} \left[\D \tau+A_a(y) \D y^a+A^{\rm FS}_I(w)\D w^I \right]^2
+e^{2B} \gamma_{IJ}(w)\D w^I \D w^J\,. 
\end{align}
For $n=3$, we have $ \gamma_{IJ}(w)\D w^I \D w^J=(\sigma_1^2+\sigma_2^2)/4$, yielding 
$A^{\rm FS}=-\cos\vartheta \D \varphi /2$ and $A_t =\omega/2 $. Thus, 
$\tau=-\psi/2$ recovers the result in the body of text.

The scalar equation can be solved as 
 $\phi'=\epsilon e^{-A-(n-1)B-C}N Q$, where $Q$ is a constant. 
Performing a further Routh reduction at fixed $Q$ and discarding
total derivatives,  the following effective Lagrangian is obtained 
\begin{align}
\label{}
\tilde L=&\, \frac {e^{A+(n-1)B+C}}{N}\left[ 2A'\left((n-1)B'+C'\right)+(n-1)B'\left((n-2)B'+2C'\right)\right]
\notag \\
&+(n-1)N  e^{A+(n-1)B+C}\left((n+1)e^{-2B}-e^{2(C-2B)}\right)+N \left(-2 e^{A-(n-1)B-3C}j^2 + \epsilon e^{-A-(n-1)B-C}Q^2 \right)\,.
\label{Lag1D2higher}
\end{align}
Working in a gauge $N=e^{A+(n-1)B+C}$, we define
\begin{align}
\label{}
q_0&\,= (n-2)\left[B+\frac{2}{n-1}(A+C)+\frac{(n-3)\ln (64j^2)-(n-9)\ln [(n+1)(n-2)]}{2n(n-1)}\right]\,, 
\\
q_1&\, = \frac{2}{n-1}\left[2(A+C)+\frac{(n-2)\ln (64j^2)-2(n-3)\ln[(n+1)(n-2)]}{n}\right]\,, 
\\
q_2&\, =2A+\frac{n-1}{n}\ln (8j^2)+\frac{n-3}{n}\ln \left(\frac{2}{(n+1)(n-2)}\right) \,. 
\end{align}
Then, Euler-Lagrange equations for $q_i$ ($i=0,1,2$) are 
\begin{align}
\label{}
q_i''=\exp \left(\sum_{j=0}^2 \ma C_{ij}q_j\right)\,, \qquad 
\ma C_{ij}=\left(
\begin{array}{ccc}
 2     &  -\frac{n-3}{2} & 0  \\
\frac{2(n-3)}{n-2}   &  2 & -1 \\
0 & -\frac{n-1}{2} & 2   
\end{array}
\right) \,. 
\end{align}
When $n=3$, this system reduces to the Cartan matrix for 
$A_1\oplus A_2$ discussed in the main text. 
Thus, the resulting coupling matrix $\ma C$ does not correspond to the Cartan matrix 
for $n>3$. In particular, $\ma C_{10}=2(n-3)/(n-2)$ is positive, which 
does not satisfy the defining requirement of a Cartan matrix that its off-diagonal entries be non-positive.\footnote{
This obstruction cannot be removed by an invertible affine
redefinition of the variables that preserves a single exponential
term on the right-hand side of each equation.
Since the three exponential terms are linearly independent, such a
redefinition can only rescale and permute the variables, apart from constant shifts.
Under an independent rescaling $q_i=s_i\ti q_i+c_i$, the
exponent matrix becomes
$\ti {\ma C}_{ij}=\ma C_{ij}s_j$, 
$\ti{\ma C}_{ii}=2s_i$.  
Thus, imposing the Cartan normalization $\ti{\ma  C}_{ii}=2$
fixes $s_i=1$, while constant shifts affect only the prefactors of the exponential terms.
The positive off-diagonal entry therefore cannot be eliminated within this class of transformations.}
Although the $(q_1,q_2)$ block takes the Cartan form of $B_2\simeq C_2$ for $n=5$ and $G_2$ for $n=7$,
its coupling to $q_0$ prevents it from defining a closed two-variable Toda subsystem.
Consequently, the Cartan-type Toda integration used in five dimensions does not extend directly to the full
higher-dimensional system. 

The Cartan-type Toda structure identified above is therefore special
to five dimensions. This does not preclude the existence of explicitly integrable sectors
in higher dimensions. Indeed, the vacuum equations admit the equal-angular-momenta
Myers-Perry family for arbitrary odd $D$.
The existence of this family, however, does not by itself establish
integrability of the general reduced system. 

The distinction from the classification based on the closed conformal Killing-Yano tensors is worth emphasizing. 
In the work of Houri, Oota and Yasui \cite{Houri:2008th}, the metric is constrained by the existence of a closed conformal Killing-Yano two-form. 
For a nondegenerate tensor, this assumption leads to the canonical Kerr-NUT-(A)dS form, and the Einstein equations determine the corresponding structure functions. 
When some of the eigenvalues of the conformal Killing-Yano tensor become constant, the canonical coordinates degenerate and the metric instead takes a generalized form involving K\"ahler base spaces. 
In particular, by choosing the K\"ahler base to be the Fubini-Study metric on $\mathbb{CP}^{n'}$, one recovers the odd-dimensional Myers-Perry solution with equal angular momenta. 
Thus, equal-spin solutions are naturally incorporated in the degenerate conformal Killing-Yano classification, but the existence of the hidden symmetry is imposed from the outset. 

The present approach is different in that no conformal Killing-Yano tensor or separability structure is assumed. 
We start directly from the most general stationary cohomogeneity-one ansatz compatible with a ${\rm U}(1)$ fibration over $\mathbb{CP}^{n'}$ and reduce the field equations to a one-dimensional mechanical system. 
In five dimensions and in the absence of a cosmological constant, this reduced system turns out to be governed by an $A_1\oplus A_2$ Toda system, from which the equal-angular-momentum Myers-Perry solution is recovered as a particular vacuum solution. 
The integrable structure therefore emerges from the reduced Einstein equations themselves, rather than being inferred from a prescribed hidden symmetry. 
This formulation can therefore be applied directly in the presence of matter fields, 
such as the phantom scalar considered here, without assuming that a closed conformal Killing-Yano tensor survives.




\begin{thebibliography}{99}


\bibitem{Stephani:2003tm}
  H.~Stephani, D.~Kramer, M.~A.~H.~MacCallum, C.~Hoenselaers and E.~Herlt,
  ``Exact solutions of Einstein's field equations,''
  Cambridge, UK: Univ. Pr. (2003) 701 P.


\bibitem{Emparan:2008eg}
R.~Emparan and H.~S.~Reall,
Living Rev. Rel. \textbf{11}, 6 (2008)
doi:10.12942/lrr-2008-6
[arXiv:0801.3471 [hep-th]].


\bibitem{Iguchi:2011qi}
H.~Iguchi, K.~Izumi and T.~Mishima,
Prog. Theor. Phys. Suppl. \textbf{189}, 93-125 (2011)
doi:10.1143/PTPS.189.93
[arXiv:1106.0387 [gr-qc]].
  

  
\bibitem{Breitenlohner:1987dg}
P.~Breitenlohner, D.~Maison and G.~W.~Gibbons,
Commun. Math. Phys. \textbf{120}, 295 (1988)
doi:10.1007/BF01217967

\bibitem{Morris:1988cz}
M.~S.~Morris and K.~S.~Thorne,
Am. J. Phys. \textbf{56}, 395-412 (1988)
doi:10.1119/1.15620

\bibitem{Morris:1988tu}
M.~S.~Morris, K.~S.~Thorne and U.~Yurtsever,
Phys. Rev. Lett. \textbf{61}, 1446-1449 (1988)
doi:10.1103/PhysRevLett.61.1446


\bibitem{Teo:1998dp}
E.~Teo,
Phys. Rev. D \textbf{58}, 024014 (1998)
doi:10.1103/PhysRevD.58.024014
[arXiv:gr-qc/9803098 [gr-qc]].



\bibitem{Ellis:1973yv}
H.~G.~Ellis,
J. Math. Phys. \textbf{14}, 104-118 (1973)
doi:10.1063/1.1666161

\bibitem{Bronnikov:1973fh}
K.~A.~Bronnikov,
Acta Phys. Polon. B \textbf{4}, 251-266 (1973)



\bibitem{Kleihaus:2014dla}
B.~Kleihaus and J.~Kunz,
Phys. Rev. D \textbf{90}, 121503 (2014)
doi:10.1103/PhysRevD.90.121503
[arXiv:1409.1503 [gr-qc]].

\bibitem{Chew:2016epf}
X.~Y.~Chew, B.~Kleihaus and J.~Kunz,
Phys. Rev. D \textbf{94}, no.10, 104031 (2016)
doi:10.1103/PhysRevD.94.104031
[arXiv:1608.05253 [gr-qc]].



\bibitem{Chew:2019lsa}
X.~Y.~Chew, V.~Dzhunushaliev, V.~Folomeev, B.~Kleihaus and J.~Kunz,
Phys. Rev. D \textbf{100}, no.4, 044019 (2019)
doi:10.1103/PhysRevD.100.044019
[arXiv:1906.08742 [gr-qc]].


\bibitem{Matos:2005uh}
T.~Matos and D.~Nunez,
Class. Quant. Grav. \textbf{23}, 4485-4496 (2006)
doi:10.1088/0264-9381/23/13/012
[arXiv:gr-qc/0508117 [gr-qc]].


\bibitem{Klemm:2015uba}
D.~Klemm, M.~Nozawa and M.~Rabbiosi,
Class. Quant. Grav. \textbf{32}, no.20, 205008 (2015)
doi:10.1088/0264-9381/32/20/205008
[arXiv:1506.09017 [hep-th]].


\bibitem{Nozawa:2026jls}
M.~Nozawa, T.~Harada and K.~Nakashi,
[arXiv:2610.05656 [gr-qc]].

\bibitem{Myers:1986un}
R.~C.~Myers and M.~J.~Perry,
Annals Phys. \textbf{172}, 304 (1986)
doi:10.1016/0003-4916(86)90186-7

\bibitem{Gibbons:2004uw}
G.~W.~Gibbons, H.~Lu, D.~N.~Page and C.~N.~Pope,
J. Geom. Phys. \textbf{53}, 49-73 (2005)
doi:10.1016/j.geomphys.2004.05.001
[arXiv:hep-th/0404008 [hep-th]].







\bibitem{Dzhunushaliev:2013jja}
V.~Dzhunushaliev, V.~Folomeev, B.~Kleihaus, J.~Kunz and E.~Radu,
Phys. Rev. D \textbf{88}, 124028 (2013)
doi:10.1103/PhysRevD.88.124028
[arXiv:1309.2448 [gr-qc]].

\bibitem{Uemichi:2026dzb}
K.~Uemichi, Y.~Koga, D.~Saito, C.~M.~Yoo and D.~Yoshida,
Phys. Rev. D \textbf{113}, no.12, 124054 (2026)
doi:10.1103/w1f7-r1x6
[arXiv:2603.12748 [gr-qc]].


\bibitem{Torii:2013xba}
T.~Torii and H.~a.~Shinkai,
Phys. Rev. D \textbf{88}, 064027 (2013)
doi:10.1103/PhysRevD.88.064027
[arXiv:1309.2058 [gr-qc]].


\bibitem{Martinez:2020hjm}
C.~Martinez and M.~Nozawa,
Phys. Rev. D \textbf{103}, no.2, 024003 (2021)
doi:10.1103/PhysRevD.103.024003
[arXiv:2010.05183 [gr-qc]].


\bibitem{Huang:2019arj}
H.~Huang and J.~Yang,
Phys. Rev. D \textbf{100}, no.12, 124063 (2019)
doi:10.1103/PhysRevD.100.124063
[arXiv:1909.04603 [gr-qc]].


\bibitem{Nozawa:2020wet}
M.~Nozawa,
Phys. Rev. D \textbf{103}, no.2, 024004 (2021)
doi:10.1103/PhysRevD.103.024004
[arXiv:2010.07560 [gr-qc]].




\bibitem{Huang:2020qmn}
H.~Huang, H.~L{\"u} and J.~Yang,
Class. Quant. Grav. \textbf{39}, no.18, 185009 (2022)
doi:10.1088/1361-6382/ac8266
[arXiv:2010.00197 [gr-qc]].




\bibitem{Nozawa:2020gzz}
M.~Nozawa,
Phys. Rev. D \textbf{103}, no.2, 024005 (2021)
doi:10.1103/PhysRevD.103.024005
[arXiv:2010.07561 [gr-qc]].

\bibitem{Blazquez-Salcedo:2020nsa}
J.~L.~Bl{\'a}zquez-Salcedo, X.~Y.~Chew, J.~Kunz and D.~H.~Yeom,
Eur. Phys. J. C \textbf{81}, no.9, 858 (2021)
doi:10.1140/epjc/s10052-021-09645-0
[arXiv:2012.06213 [gr-qc]].


\bibitem{Nozawa:2023aep}
M.~Nozawa and T.~Torii,
Phys. Rev. D \textbf{108}, no.6, 064036 (2023)
doi:10.1103/PhysRevD.108.064036
[arXiv:2306.15198 [gr-qc]].


\bibitem{Tahamtan:2016fur}
T.~Tahamtan and O.~Svitek,
Phys. Rev. D \textbf{94}, no.6, 064031 (2016)
doi:10.1103/PhysRevD.94.064031
[arXiv:1603.07281 [gr-qc]].


\bibitem{Nozawa:2023boa}
M.~Nozawa and T.~Torii,
Class. Quant. Grav. \textbf{41}, no.6, 065016 (2024)
doi:10.1088/1361-6382/ad26ec
[arXiv:2310.05460 [hep-th]].




\bibitem{Bogoyavlensky}
O. I. Bogoyavlensky, 
Comm. Math. Phys. {\bf 51},  201 (1976)
doi: 10.1007/BF01617919


\bibitem{Kostant:1979qu}
B.~Kostant,
Adv. Math. \textbf{34}, 195-338 (1979)
doi:10.1016/0001-8708(79)90057-4

\bibitem{Olshanetsky}
M. A. Olshanetsky,
Physica D: Nonlinear Phenomena,
{\bf 3}, 118 (1981). 
doi:10.1016/0167-2789(81)90121-4


\bibitem{Anderson:1995sz}
A.~Anderson,
J. Math. Phys. \textbf{37}, 1349-1355 (1996)
doi:10.1063/1.531465
[arXiv:hep-th/9507092 [hep-th]].

\bibitem{Gibbons:1987ps}
G.~W.~Gibbons and K.~i.~Maeda,
Nucl. Phys. B \textbf{298}, 741-775 (1988)
doi:10.1016/0550-3213(88)90006-5


\bibitem{Monni:1995vu}
S.~Monni and M.~Cadoni,
Nucl. Phys. B \textbf{466}, 101-111 (1996)
doi:10.1016/0550-3213(96)00090-9
[arXiv:hep-th/9511067 [hep-th]].


\bibitem{Lu:1996hh}
H.~Lu, C.~N.~Pope and K.~W.~Xu,
Mod. Phys. Lett. A \textbf{11}, 1785-1796 (1996)
doi:10.1142/S0217732396001776
[arXiv:hep-th/9604058 [hep-th]].

\bibitem{Chemissany:2010ay}
W.~Chemissany, J.~Rosseel and T.~Van Riet,
Nucl. Phys. B \textbf{843}, 413-428 (2011)
doi:10.1016/j.nuclphysb.2010.10.006
[arXiv:1009.1487 [hep-th]].

\bibitem{Lu:2013uia}
H.~L{\"u} and W.~Yang,
Class. Quant. Grav. \textbf{30}, 235021 (2013)
doi:10.1088/0264-9381/30/23/235021
[arXiv:1307.2305 [hep-th]].


\bibitem{Lu:2026lkv}
H.~Lu, P.~Y.~Wu, Z.~H.~Wu and W.~Zhao,
JHEP \textbf{08}, 112 (2026)
doi:10.1007/JHEP08(2026)112
[arXiv:2604.14288 [hep-th]].


\bibitem{Yun:2026ure}
S.~Yun,
[arXiv:2609.23079 [hep-th]].






\bibitem{Minguzzi:2006sa}
E.~Minguzzi and M.~Sanchez,
[arXiv:gr-qc/0609119 [gr-qc]].



\bibitem{Graf:2017xym}
M.~Graf and E.~Ling,
Class. Quant. Grav. \textbf{35}, no.8, 087001 (2018)
doi:10.1088/1361-6382/aab259
[arXiv:1712.06504 [gr-qc]].





\bibitem{Gibbons:2013tqa}
G.~W.~Gibbons and N.~P.~Warner,
Class. Quant. Grav. \textbf{31}, 025016 (2014)
doi:10.1088/0264-9381/31/2/025016
[arXiv:1305.0957 [hep-th]].


\bibitem{Eperon:2016cdd}
F.~C.~Eperon, H.~S.~Reall and J.~E.~Santos,
JHEP \textbf{10}, 031 (2016)
doi:10.1007/JHEP10(2016)031
[arXiv:1607.06828 [hep-th]].










\bibitem{Nozawa:2025yac}
M.~Nozawa and T.~Torii,
Phys. Rev. D \textbf{113}, no.6, 064021 (2026)
doi:10.1103/62b3-kqcw
[arXiv:2510.06561 [gr-qc]].

\bibitem{Khoo:2024yeh}
F.~S.~Khoo, B.~Azad, J.~L.~Bl{\'a}zquez-Salcedo, L.~M.~Gonz{\'a}lez-Romero, B.~Kleihaus, J.~Kunz and F.~Navarro-L{\'e}rida,
Phys. Rev. D \textbf{109}, no.8, 084013 (2024)
doi:10.1103/PhysRevD.109.084013
[arXiv:2401.02898 [gr-qc]].


\bibitem{Murata:2007gv}
K.~Murata and J.~Soda,
Class. Quant. Grav. \textbf{25}, 035006 (2008)
doi:10.1088/0264-9381/25/3/035006
[arXiv:0710.0221 [hep-th]].





\bibitem{Nozawa:2005eu}
M.~Nozawa and K.~i.~Maeda,
Phys. Rev. D \textbf{71}, 084028 (2005)
doi:10.1103/PhysRevD.71.084028
[arXiv:hep-th/0502166 [hep-th]].





\bibitem{Olshanetsky:1981dk}
M.~A.~Olshanetsky and A.~M.~Perelomov,
Phys. Rept. \textbf{71}, 313 (1981)
doi:10.1016/0370-1573(81)90023-5

\bibitem{Gibbons:2003yj}
G.~W.~Gibbons,
[arXiv:hep-th/0302199 [hep-th]].



\bibitem{Ma:2020kwc}
L.~Ma and H.~L{\"u},
JHEP \textbf{03}, 226 (2021)
doi:10.1007/JHEP03(2021)226
[arXiv:2011.12971 [hep-th]].

\bibitem{Houri:2008th}
T.~Houri, T.~Oota and Y.~Yasui,
Phys. Lett. B \textbf{666}, 391-394 (2008)
doi:10.1016/j.physletb.2008.07.075
[arXiv:0805.0838 [hep-th]].


\end{thebibliography}
\end{document}